\documentclass[10pt]{article}
\usepackage[margin=1in]{geometry}







\usepackage{hyperref}       
\usepackage{url}            
\usepackage{graphicx}
\usepackage{booktabs} 
\usepackage{amsmath}
\usepackage{amssymb}
\usepackage{amsthm}
\usepackage{graphicx}
\usepackage{xcolor}
\usepackage{enumitem}
\usepackage[numbers]{natbib}
\usepackage{verbatim}
\usepackage{algorithm}
\usepackage{algorithmic}
\usepackage{multirow}
\usepackage{caption}
\usepackage{subcaption}
\usepackage[bottom]{footmisc}
\usepackage{listings}
\lstset{breaklines=true}
\usepackage{authblk}

\newtheorem{lemma}{Lemma}
\newtheorem{theorem}{Theorem}
\newtheorem{corollary}{Corollary}

\newtheorem{proposition}{Proposition}

\title{Federated Heavy Hitters Discovery with Differential Privacy}

\author[1]{Wennan Zhu\thanks{Work done while interning at Google.}}
\author[2]{Peter Kairouz}
\author[2]{Brendan McMahan}
\author[2]{Haicheng Sun}
\author[2]{Wei Li}
\affil[1]{Rensselaer Polytechnic Institute. zhuw5@rpi.edu}
\affil[2]{Google. \{kairouz, mcmahan, haicsun, liweithu \}@google.com }

\date{}
\begin{document}

\maketitle

\begin{abstract}
The discovery of heavy hitters (most frequent items) in user-generated data streams drives improvements in the app and web ecosystems, but can incur substantial privacy risks if not done with care. To address these risks, we propose a distributed and privacy-preserving algorithm for discovering the heavy hitters in a population of user-generated data streams. We leverage the sampling and thresholding properties of our distributed algorithm to prove that it is inherently differentially private, without requiring additional noise. We also examine the trade-off between privacy and utility, and show that our algorithm provides excellent utility while also achieving strong privacy guarantees. A significant advantage of this approach is that it eliminates the need to centralize raw data while also avoiding the significant loss in utility incurred by local differential privacy. We validate our findings both theoretically, using worst-case analyses, and practically, using a Twitter dataset with 1.6M tweets and over 650k users. Finally, we carefully compare our approach to Apple's local differential privacy method for discovering heavy hitters. 
\end{abstract}

\section{Introduction}

Discovering the heavy hitters in a population of user-generated data streams plays an instrumental role in improving mobile and web applications. For example, learning popular out-of-dictionary words can improve the auto-complete feature in a smart keyboard, and discovering frequently-taken actions can provide an improved in-app user experience. Naively, a service provider can learn the popular elements by first collecting user data and then applying state-of-the-art centralized heavy hitters discovery algorithms \citep{Cormode2003, Cormode2008,charikar2002finding}. However, collecting and analyzing data from users can introduce privacy risks. 

To overcome some of these risks, the service provider can use the central model of differential privacy (DP) to provide internal or external analysts with a privacy-preserving set of learned heavy hitters \citep{dwork2006calibrating, dwork2006our, dwork2008differential, dwork2010differential, bhaskar2010discovering, dwork2014algorithmic}. However, this approach requires that users trust the service provider with their raw data. And even with a fully trusted service provider, tighter privacy regulations, such as Europe's General Data Protection Regulation (GDPR), the risk of hacks and other data breaches, and subpoena powers may encourage service providers to collect less data from their users. 

The local model of DP \citep{warner1965randomized, evfimievski2004privacy,kasiviswanathan2008ldp} addresses the above concerns by requiring users to perturb their data locally before sharing it with a service provider. Google \citep{erlingsson2014rappor}, Apple \citep{apple2017}, and others \citep{,ding2017collecting,kenthapadi2018pripearl} have deployed local DP algorithms. However, a large body of fundamental work shows that in the context of learning distributions and heavy hitters, local DP often leads to a significant reduction in utility \citep{kairouz2014extremal, wang2017locally, bassily2017practical, kairouz2016discrete, ye2018optimal,duchi2013local, cormode2018marginal}. As we show (e.g., Table~\ref{table:precision_recall}), there are regimes where local DP is infeasible for practical use. Our goal is to provide practical algorithms that provide more privacy than prior approaches in such regimes, while maintaining sufficient utility (precision and recall).\footnote{Whether or not a given approach provides sufficient privacy for a particular application is largely a domain-dependent policy question beyond the scope of this work; our goal is to expand the set of approaches available.}

Our work builds on recent advances in federated learning (FL) \citep{FL_BLOG,konevcny2016federated, mcmahan2017communication} to bridge the utility gap between the local and central models of DP. Our proposed algorithm retains the essential privacy ingredients of FL: (a) no raw data collection (only ephemeral, focused updates from a random subset of users are sent back to the service provider), (b)  decentralization across a large population of users (most users will contribute only 0 or 1 times), (c) interactivity in building an aggregate understanding of the population. However, unlike existing FL algorithms where the goal is to learn a prediction model, our work introduces a new federated approach that allows a service provider to discover the heavy hitters.



\paragraph{Contributions} We develop an interactive heavy hitters discovery algorithm that achieves central DP while minimizing the data collected from users. In contrast to classical frequency estimation problems, our goal is to discover the heavy hitters but not their frequencies\footnote{Observe that once the popular items are discovered, learning their frequencies can be done using off-the-shelf DP techniques.}. For example, in a smart mobile keyboard application, our algorithm allows a service provider to discover out-of-dictionary words and add them to the keyboard's dictionary, allowing these words to be automatically spell-corrected and typed using gesture typing.  

We assume, without loss of generality,\footnote{Regardless of the items' data type, they can always be represented by a sequence of bits.} that items (e.g., words) in user-generated data streams have a sequential structure (e.g., sequence of characters). Thus, we refer to items as sequences and leverage their sequential structure to build our algorithm. Our algorithm is interactive and runs in multiple rounds. In each round, a randomly selected set of users transmit a ``vote'' for a one element extension to popular prefixes discovered in previous rounds. The server then aggregates the received votes using a trie data structure, prunes nodes that have counts that fall below a chosen threshold $\theta$, and continues to the next round.

We prove that our algorithm is inherently differentially private, and show how the parameters of the algorithm can be chosen to obtain precise privacy guarantees (see Theorem \ref{thm-dp} and Corollary \ref{coro-theta-lambert}). When the number of users $n \ge 10^4$ and the sequences have a length of at most 10, our algorithm guarantees $(2, \frac{1}{ n^2})$-differential privacy while achieving good utility (see Figure \ref{fig-fix-prob}). See Table \ref{table:eps-delta} for the DP parameters we can provide for various population sizes.

A key property of our algorithm is that it is sufficient for the service provider to receive only the set of extensions to the trie with votes that exceed a threshold $\theta$, and the set of possible extensions is finite and known at the start of each round. A simple implementation of our algorithm would have the service provider directly receive each selected user's anonymous vote, and then immediately aggregate and threshold these votes in memory, with no persistence of the unaggregated votes.

However, our algorithm was explicitly designed to allow it to be implemented using aggregation schemes that further limit the information the service provider receives. In particular, a cryptographic secure sum protocol such as that of \citep{bonawitz2016practical} can be used to count votes, so the service provider never sees individual votes, only the aggregate sum over all users in the round (and only if a sufficient number of users participate). The service provider then is only trusted to apply the threshold $\theta$. An intriguing open question is whether an efficient secure multi-party computation can be developed which also performs the thresholding. Another approach is to use the ESA architecture of \citep{bittau17prochlo} to ensure shuffling and anonymization of the votes.

We have already discussed the privacy advantages of our approach compared to centralized approaches with DP that collect and store raw user data; undoubtedly such approaches could offer even higher utility, but we do not empirically assess this, as it is enough to show our algorithm achieves sufficient utility to be practical in many settings. Rather, we focus our empirical evaluation of utility on a comparison to local DP (in particular \citep{apple2017}), demonstrating that our algorithm obtains a strong central DP guarantee and high utility in settings where local DP performs poorly (see Table \ref{table:precision_recall} for details). We use the Sentiment140 dataset, a Twitter dataset with 1.6M tweets and over 650k users \cite{sentiment140}. For Sentiment140, the top 200 words are recalled at a rate close to 1 with $\varepsilon = 4$ and $\delta < 5 \times 10^{-9}$.

\paragraph{Related work} 
Federated learning (FL) \citep{mcmahan2017communication,konevcny2016federated,bonawitz19sysml} is a collaborative learning approach that enables a service provider to learn a prediction model without collecting user data (i.e., while keeping the training data on user devices). The training phase of FL is interactive and executes in multiple rounds. In each round, a randomly chosen small set of online users download the latest model and improve it locally using their training data. Only the updates are then sent back to the service provider where they are aggregated and used to update the global model. Much of the existing works are in the context of learning prediction models. Our work differs in that it focuses on federated algorithms for the discovery of heavy hitters.

Differential privacy (DP) is a rigorous privacy notion that has been carefully studied over the last decade \citep{dwork2006calibrating, dwork2006our, dwork2008differential, dwork2014algorithmic} and widely adopted in industry \citep{ding2017collecting,apple2017,kenthapadi2018pripearl,erlingsson2014rappor}. It provides the ability to make strong formal privacy guarantees by bounding the worst-case information loss. There is a rich body of work on distribution learning, frequent sequence mining, and heavy-hitter discovery both in the central and local models of DP \citep{bhaskar2010discovering, bonomi2013mining, diakonikolas2015differentially, xu2016differentially, zhou2018frequent, kairouz2016discrete, wang2017locally, bassily2017practical,  acharya2018communication, ye2018optimal, avent2017blender, Bun2018, cormode2018marginal}, and some recent works combine FL with central DP \citep{geyer2017differentially, mcmahan2018learning}. The central model of DP assumes that users trust the service provider with their raw data while the local one gets away with this assumption. Thus, the utility loss is not as severe in the central model where the service provider may have access to the entire dataset. Our work bridges these existing models of privacy in that it allows an honest-but-curious service provider to learn the popular sequences in a centrally differentially private way, while only having access to minimal data: a randomly chosen user submits one character extension to an already discovered popular prefix.

Methods that provide DP typically involve adding noise, such as Gaussian noise, to the data before releasing it. In this work, we show that DP can be obtained without the addition of any noise by relying exclusively on random sampling and trie pruning which achieves $k$-anonymity. The connection between DP, random sampling, and $k$-anonymity has previously appeared in the literature \citep{chaudhuri2006random, li2012sampling, gehrke2012crowd}. However, our approach and analysis are different in two fundamental ways. 
First, existing methods show how sampling and enforcing $k$-anonymity at the sequence level (in a centralized setting) can achieve central DP.  When applied to our decentralized setting, such approaches have the disadvantage of revealing the entire sequences held by sampled users. On the contrary, our approach explores how interactivity, random sampling, and $k$-anonymity can achieve central DP while also drastically minimizing the data a user shares with the service provider. 
Second, our sampling method is different from existing methods that sample records from a centralized database in an i.i.d fashion (referred to as \textit{Poisson sampling}). Under Poisson sampling, the number of chosen users can vary drastically across rounds, making such approach incompatible with existing federated learning production systems such as \citep{bonawitz19sysml}. Instead, we sample (uniformly at random) a fixed number of users in each round. Combined with interactivity over rounds, this different sampling strategy makes our approach and proof techniques different from existing ones.

Our trie-based heavy hitters (TrieHH) algorithm exploits the hierarchical structure of user-generated data streams to interactively maintain a trie structure that contains the frequent sequences. The idea of using trie-like structures for finding frequent sequences in data streams has been explored before in \citep{Cormode2003, bassily2017practical}. However, the work of \citet{Cormode2003} predates differential privacy and the TreeHist algorithm of \citet{bassily2017practical} is non-interactive, relies on sketching, achieves local DP using the randomized response, and assumes the existence of public randomness. Our approach is interactive in nature, does not use sketching or offer local DP, and does not require public randomness. The only similarity between these two approaches is the use of a trie-like data structure that maintains a list of popular prefixes, a practice that is common for efficient discovery of heavy hitters (even under no privacy constraints). In fact, the differences between these two approaches lead to a fundamentally different privacy-utility trade-off and make private heavy-hitter discovery feasible even for small-to-moderate populations. 

In Section \ref{sec:experiments}, we compare TrieHH with Apple's Sequence Fragment Puzzle (SFP) algorithm, a state-of-the-art sketching based algorithm for discovering heavy hitters with local DP \citep{apple2017}. Similar to TreeHist, SFP is also a count sketch based algorithm. However, instead of pruning by a tree structure, SFP estimates high frequency substring fragments and then stitches them together to get full length heavy hitters. We provide our source code implementation of SFP at \url{https://github.com/tensorflow/federated/tree/master/tensorflow_federated/python/research/triehh}, and a detailed description of this algorithm in Section \ref{sec:sfp} of the appendix.

\section{Preliminaries}
\label{sec:model}

\paragraph{Model and notation} We consider a population of $n$ users $\mathcal{D}=\{ u_1, u_2, \dots, u_n \}$, where user $i$ has a collection of items $\{w_{i1}, w_{i2}, \cdots, w_{iq}\}$. We abuse notation and use $\mathcal{D}$ to refer to both the set of all users and set of all items. Without loss of generality, we assume that the items have a sequential structure and refer to them as sequences. More precisely, we express an item $w$ as a sequence $w = c_1 c_2 \dots c_{|w|}$ of $|w|$ elements. For example, in our experiments (see Section \ref{sec:experiments}), we focus on discovering heavy-hitter words in a population of tweets generated by Twitter users. Therefore, each user has a collection of words, and each word can be expressed as a sequence of ASCII characters. We assume that the length of any sequence is at most $L$.

For any set $\mathcal{D}$, we build a trie via a randomized algorithm $\mathcal{M}$ to obtain an estimate of the heavy hitters. We let $p_i(w)$ denote the prefix of $w$ of length $i$. For a trie $T$ and a prefix $p = c_1, c_2 \dots c_i$, we say that $p \in T$ if there exists a path $(\text{root}, c_1, c_2, \dots, c_i)$ in $T$. Also, let $T_i$ denote the subtree of $T$ that contains all nodes and edges from the first $i$ levels of $T$. Suppose $(root, c_1, c_2, \dots, c_i)$ is a path of length $i$ in $T_i$. Growing the trie from $T_i$ to $T_{i+1}$ by ``adding prefix $(root, c_1, c_2, \dots, c_i, c_{i+1})$ to $T_{i}$'' means appending a child node $c_{i+1}$ to $c_i$.

\paragraph{Differential privacy}  A randomized algorithm $\mathcal{M}$ is $(\varepsilon, \delta)$-differentially private iff for all $\mathcal{S} \subseteq Range(\mathcal{M})$, and for all adjacent datasets $\mathcal{D}$ and $\mathcal{D}'$:
\begin{equation}
\label{eq-dp}
P(\mathcal{M}(\mathcal{D}) \in \mathcal{S}) \le e^\varepsilon P(\mathcal{M}(\mathcal{D}') \in \mathcal{S}) + \delta.
\end{equation}
We adopt user-level adjacency where $\mathcal{D}$ and $\mathcal{D'}$ are adjacent if $\mathcal{D'}$ can be obtained by adding all the items associated with a single user from $\mathcal{D}$ \citep{mcmahan2018learning}. This is stronger than the typically used notion of adjacency where $\mathcal{D}$ and $\mathcal{D'}$ differ by only one item \citep{dwork2014algorithmic}.

\paragraph{Paper organization} We focus in Section \ref{sec:single_sequence} on the setting where each user has a single sequence ($q=1$). We present the basic version of our algorithm, prove that it is differentially private, and provide worst-case utility guarantees. Combining key insights from Section \ref{sec:single_sequence}, we handle the more general case of multiple sequences per user in Section \ref{sec:multiple_sequences}. We present, in Section \ref{sec:experiments}, extensive simulation results on the Sentiment140 Twitter dataset of 1.6M tweets \citep{sentiment140}. We conclude our paper with a few interesting and non-trivial extensions in Section \ref{sec:conclusion}. All proofs and additional experiments are deferred to the accompanying supplementary material.

\section{Single Sequence per User}
\label{sec:single_sequence}

In this section, we consider a simple setting where each user has single sequence. Much of the intuition behind the algorithm and privacy guarantees we present in this section carry over to the more realistic setting of multiple sequences per user. 




We describe the proposed approach via a simple example (shown in Figure \ref{fig-flow}) where the goal is to discover popular words. Suppose we have $n = 20$ users and each user has a single word. Assume there are three popular words: ``star'' (on 3 devices), ``sun'' (on 4 devices) and ``moon''(on 4 devices). The rest of the words appear once each. We add a ``\$'' to the end of each word as an ``end of sequence'' (EOS) symbol. In each round, the service provider selects $m = 10$ random users, asks them to vote for a prefix of their word (as long as it is an extension of the prefixes learned in previous rounds), and stores the prefixes that receive votes greater than or equal to $\theta = 2$ in a trie. In the example in the figure, two prefixes ``s'' and ``m'' of length 1 grow on the trie after the first round. This means that among the 10 randomly selected users, at least two of them voted for ``s'' and at least another two voted for ``m''. Observe that users who have ``sun'' and ``star'' share the first character ``s'', so ``s'' has a significant chance of being added to the trie. In the second round, 10 users are randomly selected and provided with the depth 1 trie learned so far (containing ``s'' and ''m''). In this round, a selected user votes for the length 2 prefix of their word only if it starts with an ``s'' or ``m''. The service provider then aggregates the received votes and adds a prefix to the trie if it receives at least $\theta = 2$ votes. In this particular example, prefixes ``st'', ``su'', and ``mo'' are learned after the second round. This process is repeated for prefixes of length 3 and 4 in the third and the fourth rounds, respectively. After the fourth round, the word ``sun\$'' is completely learned, but the prefix ``sta'' stopped growing. This is because at least two of the three users holding ``star'' were selected in the second and third round, but less than two were chosen in the fourth one. The word ``moon\$'' is completely learned in the fifth round. Finally, the algorithm terminates in the sixth round, and the completely learned words are ``sun\$'' and ``moon\$''.

\begin{minipage}{.45\textwidth}
        \centerline{\includegraphics[width=0.9\textwidth]{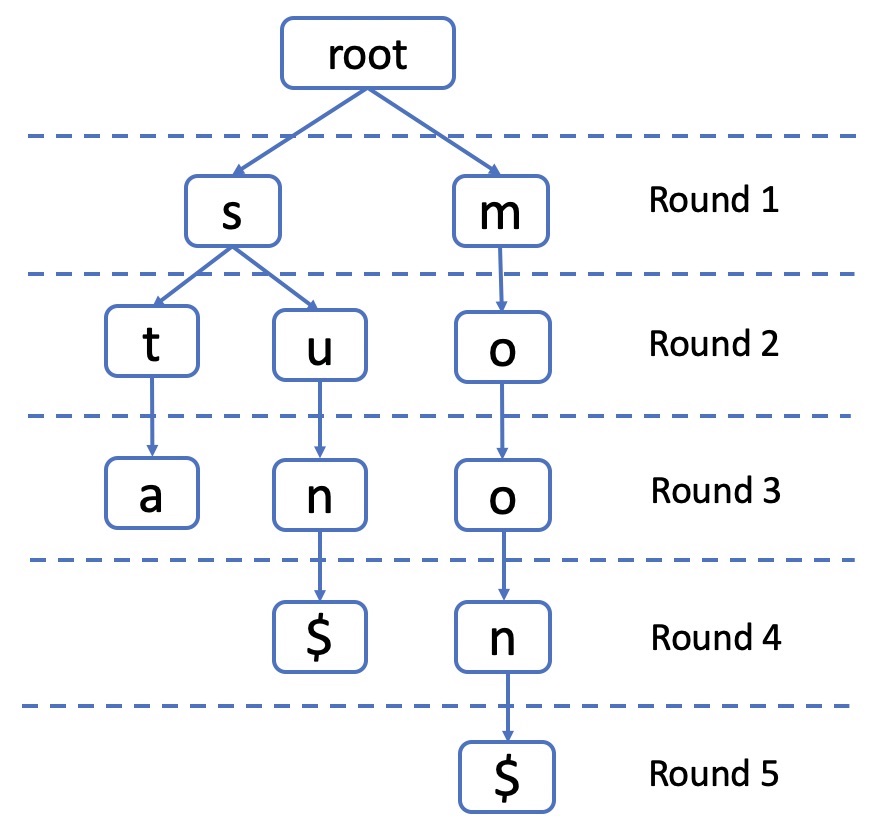}}
      \captionof{figure}{Example run of Algorithm \ref{trie-alg}.}
      \label{fig-flow}
\end{minipage}\hspace{0.1cm}\noindent\fbox{\begin{minipage}{.48\textwidth}
    \begin{algorithmic}
      \STATE {\bfseries Input:}  A set $\mathcal{D} = \{ u_1, u_2, \dots, u_n \}$ that have words $\{ w_1, w_2, \dots, w_n \}$. A threshold $\theta$. Batch size $m = \gamma \sqrt{n}$.\\
      \STATE {\bfseries Output:} A trie $T$.
      \STATE Set $T = \{ root \}$; $T_{old} = None$; $i$ = 1; \\
      \WHILE {$T ~ != ~T_{old}$}
      \STATE Choose $m$ users from $\mathcal{D}$ randomly to get a set $\tilde{\mathcal{X}}$ of sequences;
      \STATE $T_{old} = T$;
      \STATE $T$ = $\mathcal{V}(\tilde{\mathcal{X}}, T, \theta, i)$;  \ \ $i$++;
      \ENDWHILE
      \STATE return $T$;\\
    \end{algorithmic}
    \captionof{algorithm}{Trie-based Heavy Hitters $\mathcal{M}(\mathcal{D}, \theta, \gamma)$}
      \label{trie-alg}
\end{minipage}}



\begin{algorithm}[htb]
   \caption{Algorithm $\mathcal{V}(\tilde{\mathcal{X}}, T_{in}, \theta, i)$ to grow a trie by one level with a set of sequences.}
   \label{vote-alg}
\begin{algorithmic}
   \STATE {\bfseries Input:}
         A set of sequences $\tilde{\mathcal{X}} = \{ w_1', w_2', \dots, w_m' \}$. An input trie $T_{in}$ with i levels. A threshold $\theta$.
   \STATE {\bfseries Output:} An output trie.

   \STATE Initialize Candidates[$w_j'$] = 0 for all $w_j' \in \tilde{\mathcal{X}}$;
   \FOR{each sequence $w_j'$ in $\tilde{\mathcal{X}}$ that $|w_j'| \ge i$ and $p_{i-1}(w_j') \in T_{in}$}
     \STATE Candidates[$p_i(w_j')$]++;
   \ENDFOR
   \STATE  return $T_{in} + \{p \mid \text{Candidates}[p] \ge \theta\}$;\\
\end{algorithmic}
\end{algorithm}

To describe the algorithm formally, for a set of users $\mathcal{D}$, our algorithm $\mathcal{M}(\mathcal{D}, \theta, \gamma)$ runs in multiple rounds, and returns a trie that contains the popular sequences in $\mathcal{D}$. In each round of the algorithm, a batch of size $m = \gamma \sqrt{n}$ (with $\gamma \ge 1$) users are selected uniformly at random from $\mathcal{D}$. Note that there are interesting trade-offs between the utility and privacy with different choices of $\gamma$, which we will discuss later.


In the $i^{th}$ round, randomly selected users receive a trie containing the popular prefixes that have been learned so far. If a user's sequence has a length $i -1$ prefix that is in the trie, they declare the length $i$ prefix of the sequence they have. Otherwise, they do nothing. Prefixes that are declared by at least $\theta \approx \log n$ selected users grow on the $i^{th}$ level of the trie. Note that we grow at most one level of the trie in each round of the algorithm. Thus, if $c_1, \dots, c_{i-1} \notin T_{i-1}$, then $c_1, \dots, c_{i-1}, c_i$ cannot be in $T_i$. The final output of $\mathcal{M}$ is the trie returned by the algorithm when it stops growing. Algorithm \ref{trie-alg} describes our distributed algorithm and Algorithm \ref{vote-alg} shows a single round of the algorithm to grow one level of the trie.

Given the final trie, we extract the heavy-hitter sequences learned by Algorithm \ref{trie-alg} by simply outputting the discovered prefixes from the root to leaves that end with \$ (the EOS symbol). Note that the non-EOS leaves also represent frequent prefixes in the population, which might still be valuable depending on the application.



\paragraph{Privacy guarantees} Algorithm \ref{trie-alg} has several privacy advantages: (a) randomly chosen users vote on a single character extension to an already discovered popular prefix, (b) the votes are ephemeral (i.e., never stored), and (c) a total of $L\gamma\sqrt{n}$ randomly chosen users participate in the algorithm. More importantly, sequences discovered by Algorithm \ref{trie-alg} are $k$-anonymous with $k = \theta$, and as shown in the theorem below, the output of Algorithm \ref{trie-alg} is inherently $(\varepsilon, \delta)$-differentially private -- without the need for additional randomization or noise addition. 

\begin{theorem}
\label{thm-dp}
When $4 \le \theta \le \sqrt{n}$ and $1 \le \gamma \le \frac{\sqrt{n}}{\theta + 1}$, Algorithm \ref{trie-alg} is $(L\ln(1 + \frac{1}{\frac{\sqrt{n}}{\gamma \theta} - 1}), \text{ } \frac{\theta - 2}{(\theta - 3) \theta !})$-differentially private.
\end{theorem}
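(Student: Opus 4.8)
\begin{proofsketch}
Write $\varepsilon_0 := \ln\!\big(1+\tfrac{1}{\sqrt n/(\gamma\theta)-1}\big)$, so the target is $(L\varepsilon_0,\delta)$-DP with $\delta=\tfrac{\theta-2}{(\theta-3)\theta!}$; observe that $e^{\varepsilon_0}=\tfrac{\sqrt n}{\sqrt n-\gamma\theta}$, which is finite and positive precisely because $\gamma\le\tfrac{\sqrt n}{\theta+1}$. Fix adjacent datasets; by symmetry take $\mathcal D'=\mathcal D\cup\{u\}$ where $u$ holds the single sequence $w=c_1\cdots c_\ell$, $\ell\le L$, and for a prefix $p$ let $c(p)$ be the number of users of $\mathcal D$ whose sequence begins with $p$. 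The plan is to couple the two runs of Algorithm~\ref{trie-alg} round by round with shared sampling randomness: in each round draw the size-$m$ batch from $\mathcal D'$, and note that conditioned on $u$ \emph{not} being in the batch it has exactly the law of a size-$m$ draw from $\mathcal D$, so given the same trie so far that round behaves identically in the two runs. The key structural fact, immediate from Algorithms~\ref{trie-alg}--\ref{vote-alg}, is that since the trie grows one level per round and only a user whose sequence begins with a prefix already present in $T_{i-1}$ can affect a level-$i$ count, the user $u$ influences only the counts of $p_1(w),\dots,p_\ell(w)$ --- a single root-to-node chain --- and in round $i$ only the count of $p_i(w)$, which it raises by exactly one. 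Hence the two runs can diverge at round $i$ only if $u$ is sampled, $p_{i-1}(w)\in T_{i-1}$, and $p_i(w)$ collects exactly $\theta-1$ votes from the other sampled users.

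Next I would isolate the event producing the $\delta$ term. Every prefix that Algorithm~\ref{vote-alg} adds is backed by $\ge\theta$ sampled voters, hence by $\ge\theta$ holders in $\mathcal D'$; for a prefix not lying on $w$'s chain this count is unchanged by $u$, so it already has $\ge\theta$ holders in $\mathcal D$ and is automatically $\theta$-anonymous with respect to $\mathcal D$. Thus the only way the output can fail to be $\theta$-anonymous with respect to $\mathcal D$ is that some $p_j(w)$ with $c(p_j(w))=\theta-1$ enters the trie; setting $j^\star=\max\{j:c(p_j(w))\ge\theta\}$ (or $j^\star=0$), the level-by-level growth forces this to require that $p_{j^\star+1}(w)$ itself enter, which needs $c(p_{j^\star+1}(w))=\theta-1$ and, crucially, that all $\theta$ holders of $p_{j^\star+1}(w)$ in $\mathcal D'$ (its $\theta-1$ true holders together with $u$) be sampled in round $j^\star+1$. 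A routine hypergeometric estimate bounds the probability of this event by $\big(\tfrac{m}{n-\theta+2}\big)^\theta$, and since $m=\gamma\sqrt n\le\tfrac{n}{\theta+1}$ and $\theta\le\sqrt n$ this is at most $\big(\tfrac{n}{(\theta+1)(n-\theta+2)}\big)^\theta$, which standard lower bounds on $\theta!$ (for $\theta\ge4$) reduce to $\delta=\tfrac{\theta-2}{(\theta-3)\theta!}$. Call this event $B$; then $P_{\mathcal D'}(B)\le\delta$, while under $\mathcal D$ the event ``$p_{j^\star+1}(w)$ in the output'' is impossible, so the reverse direction contributes no larger a term.

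It then remains to show that, restricted to $B^c$, Algorithm~\ref{trie-alg} is $L\varepsilon_0$-DP, which I would obtain by sequential composition over the $L$ rounds. Conditioned on the trie $T_{i-1}$ built so far, the conditional law of the $i$-th level is a subsampled mechanism whose $\mathcal D'$-version is a mixture that, with probability $1-\tfrac{m}{n+1}$ (the event $u$ is not drawn), coincides with its $\mathcal D$-version; combining the standard privacy-amplification-by-subsampling inequality with the fact that off $B$ the worst $u$ can do is push a single count from $\theta-1$ to $\theta$ at a prefix that already has $\ge\theta$ holders in $\mathcal D$ should yield a per-round multiplicative bound of $e^{\varepsilon_0}=\tfrac{\sqrt n}{\sqrt n-\gamma\theta}$ between the two conditional laws, and composing across the $L$ rounds adds the $\varepsilon$'s. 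Assembling the pieces gives, for every measurable $\mathcal S$,
\[
P(\mathcal M(\mathcal D')\in\mathcal S)\le P(\mathcal M(\mathcal D')\in\mathcal S,\,B^c)+P_{\mathcal D'}(B)\le e^{L\varepsilon_0}\,P(\mathcal M(\mathcal D)\in\mathcal S)+\delta,
\]
and, using $P_{\mathcal D}(B)=0$, the symmetric inequality as well --- i.e.\ $(L\varepsilon_0,\delta)$-DP in the sense of \eqref{eq-dp}.

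The hard part will be the last step: making the per-round subsampling bound $e^{\varepsilon_0}$ genuinely worst-case over all adversarial target sequences $w$ and all reachable trie configurations, and checking that it survives both the conditioning on $B^c$ and the adaptive composition across rounds (the trie entering round $i$ being itself a function of the earlier samples). By contrast, the combinatorial tail estimate giving $\delta$ is routine once $B$ has been correctly pinned down, and the structural observation in the first paragraph --- that $u$'s influence is confined to a single chain and flips a decision only at the count value $\theta-1$ --- is exactly what makes the clean split into an $e^{L\varepsilon_0}$ part and a $\delta$ part possible.
\end{proofsketch}
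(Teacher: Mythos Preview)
Your structural observations are sound: the coupling idea, the fact that $u$'s influence is confined to the single chain $p_1(w),\dots,p_\ell(w)$, and the level-by-level decomposition are exactly what the paper exploits. The problem is where you place the split between the $e^{L\varepsilon_0}$ part and the $\delta$ part.

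You define $j^\star=\max\{j:c(p_j(w))\ge\theta\}$ and take $B=\{p_{j^\star+1}(w)\text{ enters}\}$, so that on $B^c$ every prefix of $w$ appearing in the output has $c(p_i(w))\ge\theta$ in the smaller dataset. But a direct computation of the per-round likelihood ratio (this is the paper's Lemma~\ref{lemma-T}) shows that when the smaller dataset has $k$ holders of $p_i(w)$, the ratio is at most $1+\tfrac{\theta}{k-\theta+1}$. Your event $B^c$ only guarantees $k\ge\theta$, giving a per-round factor as large as $1+\theta$, which over $L$ rounds yields $L\ln(1+\theta)$ --- not the claimed $L\varepsilon_0=L\ln\bigl(1+\tfrac{\gamma\theta}{\sqrt n-\gamma\theta}\bigr)$. (The two coincide only at the extreme $\gamma=\tfrac{\sqrt n}{\theta+1}$; for smaller $\gamma$ you are off by an unbounded factor.) Invoking ``privacy amplification by subsampling'' does not rescue this: that lemma requires the inner mechanism to already be $(\varepsilon',\delta')$-DP, whereas here the inner mechanism is deterministic thresholding, whose likelihood ratio at the flip point is infinite.

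The paper's fix is to move the split to $k\gtrsim\sqrt n/\gamma$ rather than $k\ge\theta$. Concretely, it lets $j$ be the smallest index with $P(p_j(w)\in\mathcal M(\mathcal D))\le\delta$; then for $i<j$ the contrapositive of a hypergeometric tail bound (the paper's Lemma~\ref{lemma-k}) forces $k_i\ge\sqrt n/\gamma-1$, and plugging this into $1+\tfrac{\theta}{k_i-\theta+1}$ gives exactly $e^{\varepsilon_0}$. The price is that the $\delta$ bound must now cover the whole range $\theta\le k+1\le\sqrt n/\gamma$, not just the single case $k=\theta-1$: one has to bound the full tail $\sum_{i=\theta}^{k+1}\binom{k+1}{i}\binom{n-k-1}{m-i}/\binom{n}{m}$ (via a geometric-series estimate on the term ratio), and that is where the specific form $\tfrac{\theta-2}{(\theta-3)\theta!}$ arises. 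Your $\delta$ computation --- the probability that all $\theta$ holders are simultaneously sampled --- is a special case and gives something much smaller than the theorem's $\delta$, but that saving is illusory because it is paired with a much larger $\varepsilon$.

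In short, your outline is the right shape, but the bad event $B$ is too small: it has to absorb all rounds with $k_i<\sqrt n/\gamma-1$, not just those with $k_i<\theta$, and the per-round $\varepsilon_0$ must come from a direct hypergeometric ratio bound rather than from generic subsampling amplification.
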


\begin{proofsketch}
Suppose $\mathcal{D}$ is obtained by adding $w$ to a neighboring $\mathcal{D}'$ and assume $|w| = l$. We first decompose any $\mathcal{S} \subseteq \text{Range}(\mathcal{M})$ into $\mathcal{S}_0 \cup \mathcal{S}_1 \cup \dots \mathcal{S}_l$, where $\mathcal{S}_0 = \{ T \in \mathcal{S} | p_i(w) \notin T, \text{ for } i = 1, 2, \dots, l \}$ and $\mathcal{S}_i = \{ T \in \mathcal{S} | p_1(w), \dots, p_i(w) \in T \text{ and } p_{i+1}, \dots, p_l \notin T \}$ for $i = 1, 2, \dots, l$. Assume there are $k$ users in $\mathcal{D}'$ that have prefix $p_i(w)$. Then we show that when $k$ is large, the ratio between $P(\mathcal{M}(\mathcal{D}) \in \mathcal{S}_i)$ and $P(\mathcal{M}(\mathcal{D}') \in \mathcal{S}_i)$ is small so it could be bounded by $e^\varepsilon$. When $k$ is small,  $P(\mathcal{M}(\mathcal{D}) \in \mathcal{S}_i)$ is small enough so it could be bounded by $\delta$. Intuitively, when $k$ is large, it means prefix $p_i(w)$ is already popular in $\mathcal{D}'$, so the fact that $\mathcal{D}$ has one more user with this prefix does not affect the probability of it showing in the result too much. When $k$ is small, the chance of prefix $p_i(w)$ showing up in the result is very small, even with an extra user with it in $\mathcal{D}$.
\end{proofsketch}

The above result holds for a wide array of algorithm parameters ($L$, $\gamma$, and $\theta$). The following corollary shows how precise privacy guarantees can be obtained by tuning the algorithm's parameters.  

\begin{corollary}
\label{coro-theta-lambert}
To achieve $(\varepsilon, \delta)$-differential privacy, set  $\gamma = (e ^ {\frac{\varepsilon}{L}} - 1)\sqrt{n}/({\theta e ^ {\frac{\varepsilon}{L}}})$ and $\theta = \text{max} \{ 10, \lceil e^{W(C_\delta) + 1} - \frac{1}{2} \rceil, \lceil e^{\frac{\varepsilon}{L}} - 1 \rceil \}$, where $W$ is the Lambert $W$ function \citep{corless1996lambertw} and $C_\delta = e^{-1} \ln(\frac{8}{7\sqrt{2\pi}} \delta^{-1})$. 
Further, when $n \ge 10^4$, choosing $\theta = \lceil \log_{10} n + 6 \rceil$ ensures that Algorithm \ref{trie-alg} is $(\varepsilon,  \text{ } \frac{1}{300n})$-differentially private \footnote{In general, to get a $\delta \le \frac{1}{n^a}$, by standard approximation of the Lambert function, we can choose $\theta \approx a({\ln n}/{\ln \ln n})$.}.
\end{corollary}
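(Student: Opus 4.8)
The plan is to treat Theorem~\ref{thm-dp} as a black box and invert its two parameter expressions; everything is elementary algebra except one Lambert-$W$ step. For the privacy-loss parameter, I would set the $\varepsilon$-expression of Theorem~\ref{thm-dp} equal to the target and solve for $\gamma$: exponentiating $L\ln(1+(\sqrt n/(\gamma\theta)-1)^{-1})=\varepsilon$ gives $\sqrt n/(\gamma\theta)-1=1/(e^{\varepsilon/L}-1)$, hence $\sqrt n/(\gamma\theta)=e^{\varepsilon/L}/(e^{\varepsilon/L}-1)$ and $\gamma=(e^{\varepsilon/L}-1)\sqrt n/(\theta e^{\varepsilon/L})$, exactly the stated value. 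I would then check this $\gamma$ lies in the admissible window $1\le\gamma\le\sqrt n/(\theta+1)$: the upper bound simplifies to $\theta\ge e^{\varepsilon/L}-1$, which holds since the chosen $\theta$ is at least $10$ for any moderate $\varepsilon/L$; the lower bound $\gamma\ge1$ simplifies to $\theta\le\sqrt n(1-e^{-\varepsilon/L})$, which holds in the regime of interest (e.g. $n\ge10^4$ with $\varepsilon$ not vanishingly small relative to $L/\sqrt n$). Together with $4\le\theta\le\sqrt n$ this places us inside the hypotheses of Theorem~\ref{thm-dp}.

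Next, for the $\delta$ term I need $\frac{\theta-2}{(\theta-3)\theta!}\le\delta$. Since the final $\theta$ is $\ge10$, I would bound $\frac{\theta-2}{\theta-3}\le\frac87$, reducing the task to $\theta!\ge\frac{8}{7\delta}$. Using the Stirling lower bound $\theta!\ge\sqrt{2\pi\theta}\,(\theta/e)^\theta\ge\sqrt{2\pi}\,(\theta/e)^\theta$, it suffices that $(\theta/e)^\theta\ge\frac{8}{7\sqrt{2\pi}\,\delta}$; taking logarithms and substituting $u=\ln(\theta/e)$, so that $\theta=e^{u+1}$, this becomes $u e^u\ge\frac1e\ln\frac{8}{7\sqrt{2\pi}\,\delta}=C$, i.e. $u\ge W(C)$, i.e. $\theta\ge e^{W(C)+1}$. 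Thus any integer $\theta\ge e^{W(C)+1}$ works, and I would then argue the stated $\theta=\max\{10,\lceil e^{W(C)+1}-\frac{1}{2}\rceil\}$ still suffices, because the slack discarded above — the factor $\sqrt\theta\ge\sqrt{10}$ traded away in Stirling, together with the strictness of $\frac{\theta-2}{\theta-3}<\frac87$ — comfortably absorbs rounding down by up to one unit, while the $\max$ with $10$ merely enforces the range in which the $\frac87$ bound is legitimate.

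For the concrete choice $\theta=\lceil\log_{10}n+6\rceil$ with $n\ge10^4$ and target $\delta=\frac1{300n}$, the same reduction shows it suffices that $\theta!\ge\frac87\cdot300n=\frac{2400}{7}n$. Since $n\ge10^4$ forces $\theta\ge10$, I would check the base case $n=10^4$ (there $\theta=10$ and $10!=3628800>\frac{2400}{7}\cdot10^4\approx3.43\times10^6$) and then a monotonicity step: as $n$ ranges over a decade $(10^k,10^{k+1}]$ the value $\theta$ is constant while $\frac{2400}{7}n$ grows, so the ratio $\theta!/(\frac{2400}{7}n)$ is smallest at $n=10^{k+1}$, and passing to the next decade multiplies $\theta!$ by at least $11$ versus a factor $10$ on the right; hence the inequality, once true at $n=10^4$, persists for all larger $n$. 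Finally $4\le\theta\le\sqrt n$ is immediate in this range, so Theorem~\ref{thm-dp} applies. The asymptotic remark in the footnote ($\theta\approx a\ln n/\ln\ln n$ for $\delta\le n^{-a}$) follows by substituting $\delta=n^{-a}$ into $C$ and using $W(x)=\ln x-\ln\ln x+o(1)$.

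The step I expect to be the main obstacle is the end of the second paragraph: rigorously justifying that $\lceil e^{W(C)+1}-\frac{1}{2}\rceil$, rather than the obviously safe $\lceil e^{W(C)+1}\rceil$, still guarantees $\frac{\theta-2}{(\theta-3)\theta!}\le\delta$ for every admissible $\delta$ requires quantifying the Stirling slack carefully; and simultaneously one must confirm that this $\theta$, together with the $\gamma$ from the first paragraph, never escapes the box $4\le\theta\le\sqrt n$, $1\le\gamma\le\sqrt n/(\theta+1)$ across the stated parameter ranges. The remaining computations are routine algebra and Stirling bookkeeping.
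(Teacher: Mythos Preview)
Your proposal is correct and follows essentially the same route as the paper: solve the $\varepsilon$-expression of Theorem~\ref{thm-dp} for $\gamma$, verify the admissibility constraint $\gamma\le\sqrt n/(\theta+1)$ (which the paper records equivalently as $\varepsilon\le L\ln(\theta+1)$), and for the concrete choice $\theta=\lceil\log_{10}n+6\rceil$ use the base case $n=10^4$, $\theta=10$ together with the decade-by-decade monotonicity argument (the paper phrases this as $\theta!\ge(n/10^4)\cdot 10!=360n$, then combines with $\frac{\theta-2}{\theta-3}\le\frac{8}{7}$). Your treatment is in fact more detailed than the paper's: the paper dismisses the Lambert-$W$ derivation as ``standard calculation'' without writing out the Stirling step or the substitution $u=\ln(\theta/e)$, and it does not address at all the $-\tfrac12$ rounding issue you flag as the main obstacle, so your caution there is well placed rather than a divergence from the intended argument.
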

Table \ref{table:eps-delta} shows how we can choose $\gamma$ and $\theta$ to achieve $(\varepsilon, 1/({300n}))$ and $(\varepsilon, {1}/{n^2})$ for various values of $n$. Since under Algorithm \ref{trie-alg} the privacy loss can be large with probability $\delta$ (unlike mechanisms that rely on explicit noise addition), we focus (almost exclusively) on $\delta < 1/n^2$ in Section \ref{sec:experiments} where we conduct experiments on real data and compare to local differential privacy. 

\begin{figure}[!tbp]
\centering
\begin{minipage}{0.5\textwidth}
        \includegraphics[width=\textwidth]{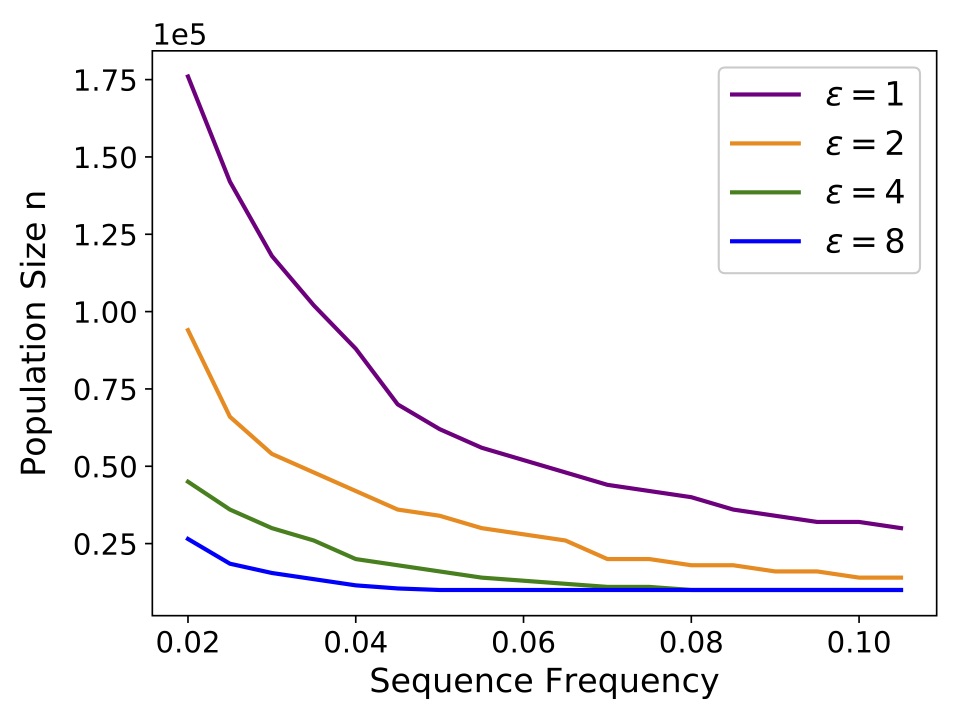}
        \caption{Minimum $n$ required to ensure (via Proposition~\ref{prop:discovery_rate}) a worst-case discovery rate greater than 0.9 for $L = 10$ and $\delta = 1/n^2$.}
        \label{fig-fix-prob}
\end{minipage}\hspace{0.3cm}
\begin{minipage}{0.45\textwidth}
\centering
\captionsetup{type=table} 
  \centering
   \caption{Choices of $\theta$ and $\gamma$ to achieve $\varepsilon = 2$ in two cases: $\delta \le \frac{1}{300n}$ and $\delta \le \frac{1}{n^2}$.}
   \label{table:eps-delta}
    \label{t3}
  \centering
  \renewcommand{\arraystretch}{1.5}
  \renewcommand{\tabcolsep}{1.6mm}
    \begin{tabular}{ | c | c |  c | c | c | }
      \hline
      \multirow{3}{*}{$n$}  & \multicolumn{4}{|c|}{$L=10$} \\ \cline{2-5}
       & \multicolumn{2}{|c|}{$\delta \le \frac{1}{300n}$} & \multicolumn{2}{|c|}{$\delta \le \frac{1}{n^2}$} \\ \cline{2-5}
       & $\theta$ & $\gamma$ & $\theta$ & $\gamma$ \\ \hline
      $10^4$  & 10 & 1.81 & 12 & 1.51 \\ \hline
      $10^5$ & 11 & 5.21 & 14 & 4.09 \\ \hline
      $10^6$  & 12 & 15.10 & 15 & 12.08 \\ \hline
      $10^7$  & 13 & 44.09 & 17 & 33.71 \\ \hline
    \end{tabular}
    \vspace{0.5cm}
\end{minipage}
\end{figure}

\paragraph{Utility guarantees} By the sampling nature of Algorithm \ref{trie-alg}, sequences that appear more frequently are more likely to be learned. The batch size $m$ and threshold $\theta$ could be tuned to trade off utility for privacy. For a user set of size $n$, smaller $m$ and larger $\theta$ achieve better privacy at the expense of lower utility, and vice versa. 

To quantify utility under Algorithm \ref{trie-alg}, we examine the worst-case discovery rate of a sequence (probability of discovering it) as a function of its frequency in the dataset. In particular, we consider the worst-case discovery rate which captures the probability of discovering a sequence assuming that it shares no prefixes with other sequences in the dataset. In the presence of such common prefixes, the discovery rate will only get better (see Section \ref{sec:experiments} for a comparison between worst-case discovery rates and ones that are achievable on real data). 

\begin{proposition}
\label{prop:discovery_rate}
Suppose a sequence appears $W$ times in a dataset of $n$ users where the longest sequence has length $L$. Then the worst-case discovery rate under Algorithm \ref{trie-alg} is given by
\begin{equation}
\label{eq:discovery_rate}
 \left(\frac{1}{\binom{n}{m}} \sum_{i = \theta}^{\min\{W, m\}} \binom{W}{i} \binom{n-W}{m -i}\right) ^ L.
\end{equation}
\end{proposition}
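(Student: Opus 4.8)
The plan is to analyze a single "thread" of the algorithm --- the sequence of events that must all occur for the target sequence to be fully discovered --- and then argue that, in the worst case where the target shares no prefixes with any other sequence, these events are conditionally independent across rounds given that earlier rounds succeeded. Concretely, fix the target sequence $s$ with $|s| = l \le L$ that appears $W$ times. Because $s$ shares no prefix with any other sequence, the prefix $p_i(s)$ can only receive votes from the $W$ users holding $s$, and $p_i(s)$ being on the trie after round $i$ requires that (i) $p_{i-1}(s)$ was already on the trie (so that the holders of $s$ are eligible to vote in round $i$), and (ii) at least $\theta$ of the $m$ users sampled in round $i$ are among the $W$ holders of $s$. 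Since $p_0(s)$ is the root and is always present, the discovery of $s$ (i.e., all of $p_1(s), \dots, p_l(s)$ appearing, where $p_l(s)$ ends in the EOS symbol \$) is exactly the conjunction of $l$ such "success" events, one per round $i = 1, \dots, l$.

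The next step is to compute the probability of a single round's success event. In round $i$, the server samples $m$ users uniformly without replacement from the $n$ users; the number $X$ of sampled users who hold $s$ is hypergeometric with parameters $(n, W, m)$, so
\[
P(X = j) = \frac{\binom{W}{j}\binom{n-W}{m-j}}{\binom{n}{m}}.
\]
The event "at least $\theta$ holders of $s$ are sampled" has probability $\sum_{j=\theta}^{\min\{W,m\}} P(X=j)$, which is precisely the bracketed quantity in \eqref{eq:discovery_rate}. The key structural claim to justify is that, conditioned on rounds $1, \dots, i-1$ having succeeded, the probability that round $i$ succeeds is still exactly this same hypergeometric tail: each round draws a fresh independent sample of $m$ users from $\mathcal{D}$, and --- crucially, in the worst-case scenario --- whether round $i$'s draw contains $\ge \theta$ holders of $s$ does not depend on which users were drawn in previous rounds, because the only thing the previous rounds establish (for the purpose of discovering $s$) is that $p_{i-1}(s)$ is on the trie, which is exactly the conditioning event. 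Hence the $l$ per-round success events are independent, and the discovery probability is the product of $l$ identical factors. Since $l \le L$ and each factor is at most $1$, lower-bounding $l$ by $L$ gives the stated worst-case rate with exponent $L$; alternatively one notes that the worst case over sequences of length up to $L$ is attained at length exactly $L$.

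I would organize the write-up as: first a lemma establishing that in the no-shared-prefix scenario, $p_i(s) \in T$ after round $i$ if and only if $p_{i-1}(s) \in T$ after round $i-1$ and $\ge \theta$ of round $i$'s sample hold $s$; then the independence-across-rounds observation; then the hypergeometric computation; then assembling the product. The main obstacle I anticipate is making the independence argument fully rigorous: one must be careful that the random sampling in distinct rounds is genuinely independent (the algorithm as stated samples afresh each round, so this is fine), and that the conditioning event "$p_{i-1}(s) \in T$" is measurable with respect to rounds $1,\dots,i-1$ only and does not otherwise correlate with round $i$'s draw --- which holds precisely because, under the no-shared-prefix assumption, no side information about the holders of $s$ leaks through the trie beyond the depth-$(i-1)$ prefix status. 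A secondary subtlety is the exponent: the bound is "worst-case" also in the sense that a sequence of length $l < L$ would have a strictly larger discovery probability (exponent $l$ rather than $L$), so the statement should be read as the minimum over admissible sequences, and I would note this explicitly rather than claim equality for every sequence.
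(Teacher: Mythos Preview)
Your proposal is correct and follows essentially the same route as the paper: the paper's own proof is a one-liner (``derived directly from Proposition~\ref{prop}''), where Proposition~\ref{prop} is exactly your hypergeometric tail computation for a single round, and the surrounding text states the no-shared-prefix worst-case assumption just as you do. Your write-up simply supplies the details the paper omits --- the per-round independence argument from fresh sampling and the observation that the worst case over lengths $l \le L$ is attained at $l = L$ --- so there is no substantive divergence.
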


Using Corollary \ref{coro-theta-lambert} and Proposition \ref{prop:discovery_rate}, we can investigate how large the population should be if we want to discover sequences with high probability for a fixed $\varepsilon$. Figure \ref{fig-fix-prob} shows the relationship between sequence frequency and population size $n$ if we want the worst-case discovery rate to be at least 0.9 for different $\varepsilon$'s. Naturally, in order to be discovered with high probability, lower frequency sequences require larger population size, and vice versa. We also need larger populations for stronger privacy guarantees (smaller $\varepsilon$).


\paragraph{Remarks} A few remarks are in order. First, in a production implementation of Algorithm \ref{trie-alg}, not all users may be online in every round of the protocol. In such a situation, the service provider will sample uniformly at random from available users. Therefore, assuming a strong adversary which knows the number and identities of online users in every round, the privacy guarantees will be determined by the number of online users. Second, Theorem \ref{thm-dp} shows that the range of $\gamma$ is: $[1, \sqrt{n}/(\theta + 1)]$. Thus, $\gamma = 1$ is enough to achieve single digit epsilon, and if users are available, it could be increased up to $\sqrt{n}/(\theta + 1)$ to achieve better utility. More importantly, this paper tackles the regime where $n \sim 10^5 - 10^7$ -- see Table 1 for the choices of $\gamma$ to get maximum utility in this setting. Even the upper bound on $\gamma$ is not on the order of $\sqrt{n}$ (but rather 2 to 3 orders smaller than $\sqrt{n}$). For instance, $\gamma \approx 33$ when $\varepsilon = 2$, $\delta = 1/n^2$ and $n = 10^7$. Third, we study the communication cost of Algorithm \ref{trie-alg} in Section \ref{sec:comm_cost} of the appendix, but it is not the central quantity that this work focuses on.

\section{Multiple Sequences per User}
\label{sec:multiple_sequences}

In this section, we consider the more general setting where each user could have more than one sequence on their device. Suppose the population is a set of $n$ users $ \mathcal{D}=\{ u_1, u_2, \dots, u_n \}$, and each user $u_i$ has a set of sequences $\{w_{i1}, w_{i2}, \dots, w_{iq} \}$. 

Let $c_i(w_j)$ denote the number of appearances of $w_j$ on $u_i$'s device. We define the local frequency of $w_j$ on $u_i$'s device as $f_i(w_j) = c_i(w_j) / \sum_j c_i(w_j) $. Note that the sum of all the sequences' local frequencies on $u_i$'s device is 1, i.e. $\sum_{j} f_i(w_j) = 1$. If a sequence $w_j$ has 0 appearance on $u_i$'s device, then $f_i(w_j) = 0$. Similarly, for a certain prefix $p_j$, let $c_i(p_j)$ denote the number of appearances of $p_j$ on $u_i$'s device. Then the frequency of $p_j$ on $u_i$'s device is $f_i(p_j) = c_i(p_j) / \sum_j c_i(p_j) $.

We are now ready to generalize Algorithm \ref{trie-alg} to accommodate multiple sequences per user. In each round of the algorithm, we select a batch of $m$ users from $\mathcal{D}$ uniformly at random. A chosen user $u_i$ randomly selects a sequence $w_j \in u_i$ with probability $f_i(w_j)$, i.e., according to its local frequency. Thus, as in Algorithm \ref{trie-alg}, we still select $m$ sequences from $m$ users in every round. The voting step by these $m$ sequences proceeded in the same way described in Algorithm \ref{vote-alg}. Algorithm \ref{trie-alg-multi} shows the full algorithm.


\begin{algorithm}[htb]
  \caption{A Trie-based Frequent Sequence Algorithm $\mathcal{M}(\mathcal{D}, \theta, \gamma)$ for Multiple Sequences per User.}
  \label{trie-alg-multi}
\begin{algorithmic}
  \STATE {\bfseries Input:} A set $\mathcal{D} = \{ u_1, u_2, \dots, u_n \}$, A threshold $\theta$. Batch size $m = \gamma \sqrt{n}$.
  \STATE {\bfseries Output:} A trie.
  \STATE Set $T = \{ root \}$; $T_{old} = None$; $i$ = 1; \\
  \WHILE {$T != T_{old}$}
  \STATE Choose $m$ users from $\mathcal{D}$ uniformly at random, denote as $\tilde{\mathcal{X}}$. Initialize $\tilde{\mathcal{X}} = \{\}$.
  \FOR{For each user $u_i \in \tilde{\mathcal{X}}$}
  \STATE Randomly select a sequence $w_j \in u_i$ with respect to its frequency $f_i(w_j)$ in $u_i$, and add $w_j$ to $\tilde{\mathcal{X}}$.
  \ENDFOR
  \STATE $T_{old} = T$;
  \STATE $T$ = $\mathcal{V}(\tilde{\mathcal{X}}, T, \theta, i)$;  \ \ $i$++;
  \ENDWHILE
  \STATE return $T$;\\
\end{algorithmic}
\end{algorithm}

Interestingly, the differential privacy guarantees we obtained in the single sequence setting also hold in the multiple sequence setting. This is formally stated in Corollary \ref{thm-dp-multi}. To get this conclusion, we first provide the following more general (but intuitive) result.
\begin{theorem}
\label{thm-dp-multi-general}
Assume mechanism $M$ achieves $(\varepsilon, \delta)$ record-level\footnote{The difference between record-level and user-level DP is in the way neighboring datasets are defined. Under record-level DP, only a single record is varied when comparing $\mathcal{D}$ to $\mathcal{D'}$.} DP on a dataset of size n. Consider a setting where we have $n$ users and an arbitrary number of records per user. Then the mechanism that first selects 1 record per user (deterministically or randomly)  then applies $M$ to the sampled dataset of size n achieves $(\varepsilon, \delta)$ user-level DP. 
\end{theorem}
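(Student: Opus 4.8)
The plan is to reduce user-level DP of the composed mechanism directly to record-level DP of $M$ via a coupling argument on the intermediate sampled dataset. Let $\mathcal{A}$ denote the composed mechanism: given a multi-record dataset $\mathcal{U}$, it first produces a sampled dataset $S(\mathcal{U})$ consisting of exactly one record per user (chosen deterministically, or randomly according to some per-user rule — in our application, according to local frequency), and then outputs $M(S(\mathcal{U}))$. Fix two user-level neighbors $\mathcal{U}$ and $\mathcal{U}'$, where $\mathcal{U}$ is obtained from $\mathcal{U}'$ by adding all records of a single user $u$ (so $\mathcal{U}'$ has $n-1$ users and $\mathcal{U}$ has $n$; or, in the fixed-size convention, $\mathcal{U}$ and $\mathcal{U}'$ agree on all users except $u$). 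The key observation is that the sampling step $S$ acts \emph{independently across users}: the record selected for user $v \neq u$ depends only on the records $v$ holds, which are identical in $\mathcal{U}$ and $\mathcal{U}'$. Hence we can couple the randomness of $S(\mathcal{U})$ and $S(\mathcal{U}')$ so that they agree on every user other than $u$, and differ only in the single record contributed by $u$. In other words, conditioned on any realization of the shared sampling randomness, $S(\mathcal{U})$ and $S(\mathcal{U}')$ are \emph{record-level neighbors} (they differ by exactly one record, namely $u$'s sampled record versus its absence).

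The next step is to push this through the record-level guarantee of $M$. For any measurable $\mathcal{S} \subseteq \mathrm{Range}(\mathcal{A}) = \mathrm{Range}(M)$, condition on the shared sampling randomness $r$ and write $D_r = S_r(\mathcal{U})$, $D_r' = S_r(\mathcal{U}')$, which are record-level adjacent by the previous paragraph. Then $P(M(D_r) \in \mathcal{S}) \le e^\varepsilon P(M(D_r') \in \mathcal{S}) + \delta$ by the record-level DP of $M$. Taking expectation over $r$ (using linearity of expectation, and that $e^\varepsilon$ and $\delta$ are constants independent of $r$) gives
\begin{equation}
P(\mathcal{A}(\mathcal{U}) \in \mathcal{S}) = \mathbb{E}_r\!\left[P(M(D_r)\in\mathcal{S})\right] \le e^\varepsilon\, \mathbb{E}_r\!\left[P(M(D_r')\in\mathcal{S})\right] + \delta = e^\varepsilon\, P(\mathcal{A}(\mathcal{U}')\in\mathcal{S}) + \delta,
\end{equation}
which is exactly the $(\varepsilon,\delta)$ user-level DP inequality. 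The symmetric direction (and the deterministic-selection case, where the coupling is trivial) is identical.

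The main obstacle — really the only subtlety — is making the coupling rigorous when the selection is \emph{random} and the two datasets have different user sets: one must define the sampling randomness as a collection of independent per-user seeds $\{r_v\}$, verify that the selection rule for $v$ is a function of $r_v$ and $v$'s records alone (so it genuinely does not see $u$ or the other users), and confirm that the resulting sampled datasets differ in exactly one record under this coupling. A second minor point is bookkeeping around the size convention: Theorem \ref{thm-dp-multi-general} states $M$ achieves record-level DP ``on a dataset of size $n$,'' so one should make sure the adjacency notion for $M$ matches whatever add/remove-versus-replace convention is used for the user-level neighbors of $\mathcal{A}$; under the add-a-user convention the sampled datasets are add-a-record neighbors, which is the standard record-level adjacency. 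Once the coupling is set up, the rest is the one-line conditional expectation above, so I would spend the bulk of the write-up on carefully stating the per-user independence of $S$.
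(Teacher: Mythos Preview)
Your proposal is correct and is essentially the same argument as the paper's proof, just phrased in coupling language rather than as explicit sums: the paper writes $P(\tilde M(\mathcal{D})\in\mathcal{S})$ as a double sum over $d_1$ and $d_2\cdots d_n$, factors the sampling probability using per-user independence, fixes $d_2\cdots d_n$ (your ``condition on $r$''), applies record-level DP to swap $d_1$ for an arbitrary $d_1'$, and then integrates $d_1'$ against $P(d_1'\mid D_1')$. One minor bookkeeping note: the paper's proof actually uses the \emph{replace} convention (both datasets have $n$ users, differing in $D_1$ versus $D_1'$), so the sampled datasets are replace-record neighbors; your write-up leans on the add-user convention, but as you note this is just a matter of matching the adjacency notion assumed for $M$, and the argument goes through unchanged either way.
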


\begin{corollary}
\label{thm-dp-multi}
When $4 \le \theta \le \sqrt{n}$ and $1 \le \gamma \le \frac{\sqrt{n}}{\theta + 1}$, Algorithm \ref{trie-alg-multi} is $(L\ln(1 + \frac{1}{\frac{\sqrt{n}}{\gamma \theta} - 1}), \text{ } \frac{\theta-2}{(\theta-3) \theta !})$-differentially private.
\end{corollary}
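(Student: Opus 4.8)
The plan is to present Algorithm~\ref{trie-alg-multi} as the composition ``per-user, draw one sequence, then run the single-sequence mechanism'' and to invoke Theorem~\ref{thm-dp-multi-general}. Two observations make that theorem applicable. First, in the single-sequence setting each user holds exactly one record, so user-level and record-level adjacency coincide; hence Theorem~\ref{thm-dp}, though stated for user-level adjacency, already certifies that Algorithm~\ref{trie-alg}, regarded as a mechanism $M$ on a set of $n$ sequences, is $(\varepsilon_0,\delta_0)$ record-level differentially private with $\varepsilon_0 = L\ln\left(1+\frac{1}{\sqrt n/(\gamma\theta)-1}\right)$ and $\delta_0 = \frac{\theta-2}{(\theta-3)\theta!}$. (Here one uses that the proof of Theorem~\ref{thm-dp} needs only $\gamma\le\sqrt n/(\theta+1)$, not the extra hypothesis $\gamma\ge1$, so the bound is available for all $\gamma$ admissible in Corollary~\ref{thm-dp-multi}.) Second, in Algorithm~\ref{trie-alg-multi} a selected user $u_i$ draws its sequence from the distribution $f_i(\cdot)$, which depends only on $u_i$'s own data and is independent across users -- exactly the per-user ``selects $1$ record per user (randomly)'' ingredient Theorem~\ref{thm-dp-multi-general} assumes, so that for two user-level-neighboring multi-sequence datasets the sampled length-$n$ (resp.\ length-$(n-1)$) datasets are record-level neighbors.

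The substantive step is to check that Algorithm~\ref{trie-alg-multi} really is $M$ post-composed with such a sampling map. The only mismatch is that Algorithm~\ref{trie-alg-multi} re-draws a fresh sequence for a user each round that user is selected, rather than committing to one sequence for the whole run. I would remove the mismatch by moving all of a user's randomness up front: the trie has depth at most $L+1$, so the algorithm runs for at most $R := L+2$ rounds; let each user $u_i$ pre-draw the tuple $(\tilde w_i^{(1)},\dots,\tilde w_i^{(R)})$ of i.i.d.\ $f_i$-samples and treat this tuple as $u_i$'s single record. Then Algorithm~\ref{trie-alg-multi} equals $M'\circ A$, where $A$ is the per-user independent sampling map and $M'$ is the variant of Algorithm~\ref{trie-alg} that in round $r$ feeds coordinate $r$ of each participating user's tuple into the voting step $\mathcal{V}$. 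The proof of Theorem~\ref{thm-dp} transfers to $M'$ unchanged, since it only reasons about how many participating users carry a given prefix in a given round and is insensitive to whether a user's sequence varies across rounds; hence $M'$ is $(\varepsilon_0,\delta_0)$ record-level DP. Plugging $M'$ and $A$ into Theorem~\ref{thm-dp-multi-general} gives that Algorithm~\ref{trie-alg-multi} is $(\varepsilon_0,\delta_0)$ user-level DP, which is the corollary once the expressions for $\varepsilon_0,\delta_0$ are substituted. (Alternatively, one applies Theorem~\ref{thm-dp-multi-general} round by round, coupling the sampled sequences of the $n-1$ shared users to coincide in both executions and using that round $r$'s voting step inspects only round $r$'s sample.)

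The step I expect to be the main obstacle is exactly this per-round resampling: making airtight that it does not degrade the guarantee, i.e.\ that a user's ``record'' may be taken to be the whole tuple of its potential per-round contributions and that the single-sequence analysis of Theorem~\ref{thm-dp} survives that reinterpretation. A secondary point worth spelling out is why a \emph{record}-level rather than user-level guarantee on the inner mechanism is the right hypothesis: after the sampling map, two user-level-neighboring multi-sequence datasets differ only in the presence or absence of the single sampled record of the added user, so record-level DP of $M$ is precisely what lifts to user-level DP of the composition. The rest -- substituting $4\le\theta\le\sqrt n$ and $\gamma\le\sqrt n/(\theta+1)$ and reading off $\varepsilon_0,\delta_0$ -- is identical to the single-sequence case and routine.
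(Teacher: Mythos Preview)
Your approach is the same as the paper's: the paper proves this corollary in a single sentence by invoking Theorem~\ref{thm-dp-multi-general}, exactly as you do. You are in fact more careful than the paper in flagging that Algorithm~\ref{trie-alg-multi} resamples a fresh sequence for a user each round rather than committing to one upfront, and in proposing the tuple-record and round-by-round workarounds; the paper does not address this mismatch at all.

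One small caution on the tuple-record fix: the claim that the proof of Theorem~\ref{thm-dp} ``transfers unchanged'' to $M'$ is a little quick. That proof's decomposition $\mathcal{S}=\bigcup_i \mathcal{S}_i$ and the chaining of Lemma~\ref{lemma-T} across rounds rely on the nested-prefix structure $p_1(w)\subset p_2(w)\subset\cdots$ of a \emph{single} added word $w$, which is what lets the total ``bad'' mass be bounded by a single $\delta_0$ rather than $L\delta_0$. For a tuple the relevant prefixes $p_1(w^{(1)}),p_2(w^{(2)}),\dots$ need not be nested, so some adaptation of the decomposition is required; your round-by-round alternative is clean but, via naive composition, would yield $L\cdot\frac{\theta-2}{(\theta-3)\theta!}$ rather than the stated $\delta$. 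This subtlety is, however, equally a gap in the paper's one-line argument and not a defect specific to your proposal. (Your parenthetical remark that $\gamma\ge 1$ is unused in Theorem~\ref{thm-dp} is also slightly off: the bound $\gamma+1/\gamma\le\gamma+1$ in the proof of Lemma~\ref{lemma-k} does use it.)
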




\section{Experiments}
\label{sec:experiments}

We now showcase the performance of the trie-based heavy hitters (TrieHH) algorithm on real data and compare it to Apple's Sequence Fragment Puzzle (SFP) algorithm, a state-of-the-art sketching based algorithm for discovering heavy hitters with local DP \citep{apple2017}. We provide our source code implementation of both SFP and TrieHH at \url{https://github.com/tensorflow/federated/tree/master/tensorflow_federated/python/research/triehh}, and include a detailed description of SFP in Section \ref{sec:sfp} of the appendix. For a fair comparison between SFP and TrieHH, we ``amplify'' the local $\varepsilon_{local}$ used by SFP to a central $(\varepsilon, \delta)$ used in TrieHH according to Theorem 5.3 of \cite{balle2019privacy}. We also focus exclusively on the discovery stage of SFP and do not account for the count estimation stage. Since the trade-off between precision and recall could be tuned by a parameter $T$ \footnote{The parameters are proxies and do not necessarily represent the actual performance of Apple’s system.} under SFP, we compare TrieHH and SFP using precision, recall, and $F_1$ score. We use Sentiment140, a rich Twitter dataset \citep{sentiment140}, and conduct three sets of experiments (see below for details). We run our experiments many times and report averaged utility metrics with 0.95 confidence intervals. 

\begin{figure}[!bp]
  \centering
  \begin{minipage}[t]{0.48\textwidth}
    \includegraphics[width=\textwidth]{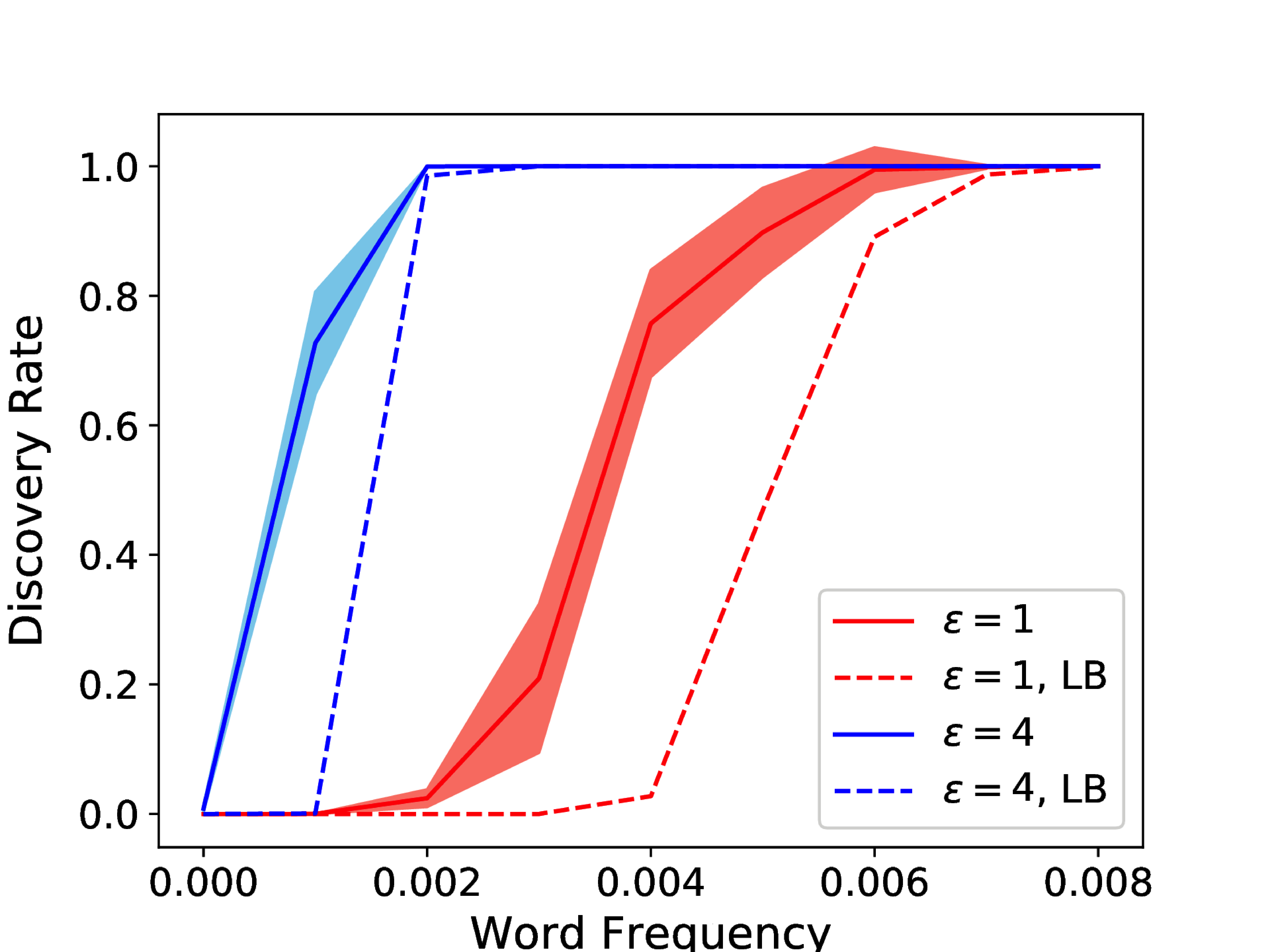}
  \caption{Frequency vs. discovery rate with the theoretical lower bound in the single word setting. ($\delta = 1/n^2$)}
  \label{fig-freq-single}
  \end{minipage}
  \hfill
  \begin{minipage}[t]{0.48\textwidth}
      \includegraphics[width=\textwidth]{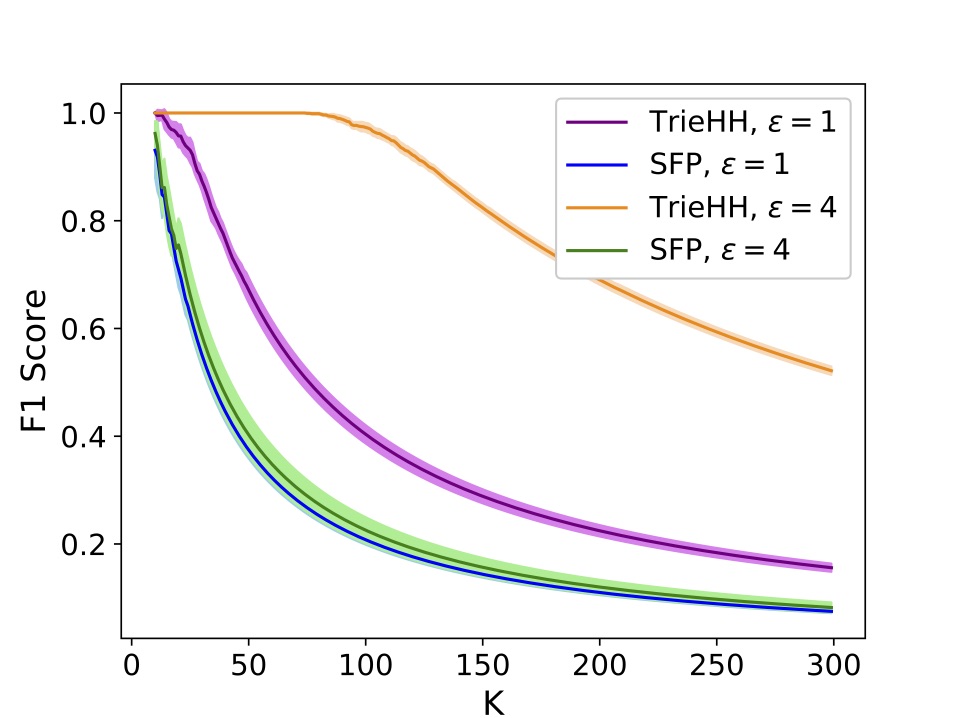}
  \caption{F1 Score of the top K words in the single word setting. $T = 20$ for SFP.}
  \label{fig-recall-single}

  \end{minipage}
\end{figure}


  

\paragraph{Single word per user: heavy hitters case}
To simulate this setting that each user has a single word using Sentiment140, we create a dataset by choosing the word with highest local frequency for each user and apply TrieHH on this dataset.
Figure \ref{fig-freq-single} shows the relationship between the word frequencies and the discovery rate using TrieHH. We limit $L$ to 10, set $\delta = 1/n^2$, and choose $\theta$ and $\gamma$ according to Corollary \ref{coro-theta-lambert} to achieve various values of $\varepsilon$. The dashed lines represent the theoretical worst-case bounds on the discovery probability (presented in Section \ref{sec:single_sequence}). Observe that there is a gap between the experimental results and the theoretical worst-case ones. This is because the theoretical bounds assume that sequences share no prefixes with others in the dataset, while in Sentiment140, many English words do share some prefixes. We also study the $F_1$ score of the $K$ highest frequency words in the population. Figure \ref{fig-recall-single} shows the $F_1$ score of the top $K$ words vs. $K$ with comparison to SFP. For SFP, $\varepsilon = 1 \rightarrow  \varepsilon_{local} = 4.29$ and  $\varepsilon = 4 \rightarrow \varepsilon_{local} = 4.96$. Observe that at $\varepsilon = 4$, the top 100 words have an $F_1$ score close to 1 under TrieHH, in comparison to an and $F_1$ score close to 0.2 under SFP.

\paragraph{Single word per user: out-of-vocab (OOV) case} 
To simulate this setting using Sentiment140, OOV words are obtained by first scanning through the dataset and keeping only words that are made up of English letters and a few other symbols (such as "@" and "\#") and then ensuring that these words do not belong to a highly tuned dictionary of over 260k words. After this pre-processing step, the frequencies of the OOV words are calculated and a dataset of size 6M is sampled according to those frequencies. Figure \ref{fig-f1-single-oov} shows the F1 score of the top $K$ words for both TrieHH and SFP. Observe that the curves for both TrieHH and SFP are not monotonically decreasing for small $K$. This is because there are many long words in the top 10 to 20 of the OOV Twitter dataset (corresponding to usernames of trending Twitter users), and both algorithms perform worse for longer words. For larger $K$, the lengths of top words get smaller and more consistent. Table \ref{table:precision_recall} shows recall at $K=50$ and precision for both algorithms with different choices $T$ for SFP. For SFP, $\varepsilon = 1 \rightarrow  \varepsilon_{local} = 5.31$ and  $\varepsilon = 4 \rightarrow \varepsilon_{local} = 5.99$ due to amplification. By increasing $T$ for SFP, there is a gain of recall but the precision also drops dramatically. Some examples of interesting OOV words we have discovered include: "*hugs*", "*sigh*", ":'(", "@tommcfly", "@dddlovato", "\#ff", "\#fb", "b/c", "ya'll". The complete list of heavy-hitter OOV words and discovered ones are given in Section \ref{sec:discovered-oov} of the appendix.

\begin{figure}[!tbp]
\centering
\begin{minipage}{0.48\textwidth}

\includegraphics[width=\textwidth]{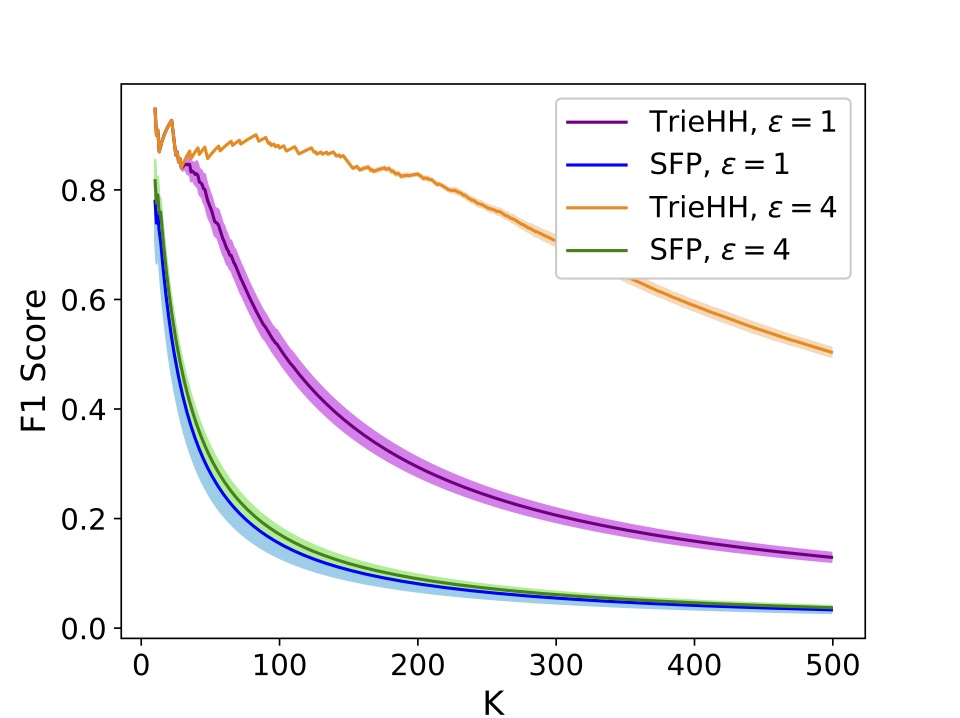}
\caption{F1 Score of the top K words in the single word setting of OOV case ($\delta= 1/n^2$). $T = 20$ for SFP.}
        \label{fig-f1-single-oov}
\end{minipage}\hspace{0.3cm}
\begin{minipage}{0.48\textwidth}
\centering
\captionsetup{type=table} 
  \centering
   \caption{Comparison of recall at $K=50$ and precision between TrieHH and SFP in the OOV setting for $\delta = \frac{1}{n^2}$ and  $T = 20, 80$ under SFP.}
    \label{table:precision_recall}
  \centering
  \renewcommand{\arraystretch}{1.5}
  \renewcommand{\tabcolsep}{1.6mm}
    \begin{tabular}{ | c | c |  c | c | c | }
      \hline
      \multirow{2}{*}{}  
       & \multicolumn{2}{|c|}{$\varepsilon = 1$} & \multicolumn{2}{|c|}{$\varepsilon = 4$} \\ \cline{2-5}
       & Recall & Prec & Recall & Prec \\ \hline
      TrieHH  & $0.65$ & 1 & $0.76$ & 1 \\ \hline
      SFP ($20$) & $0.17$ & 0.853 & $0.19$ & 0.867  \\ \hline
      SFP ($80$) & $0.25$ & 0.494 & $0.325$ & 0.456 \\ \hline
    \end{tabular}
\end{minipage}
\end{figure}

  


\vspace{-0.30cm}
\paragraph{Multiple words per user: heavy hitters case}
We use Sentiment140 as is for this experiment and calculate the population frequency of $w_j$ by $F(w_j) = \frac{1}{n}\sum_{i} f_i(w_j)$. Similar to the single word setting, Figure \ref{fig-freq-multi} shows the relationship between the word frequency and the discovery rate using Algorithm \ref{trie-alg-multi}. 
Note that in the multiple words setting, it is difficult to get a non-trivial lower bound on the discovery rate of Algorithm \ref{trie-alg-multi} because such bound heavily depends on the distribution of words. 
Figure \ref{fig-freq-multi} shows the discovery rate and Figure \ref{fig-recall-multi} shows the recall of the top $K$ words. Observe that the top 200 words are recalled at a rate close to 1 with $\varepsilon = 4$ and $\delta < 5 \times 10^{-9}$

\begin{figure}[!tbp]
  \centering
  \begin{minipage}[t]{0.48\textwidth}
    \includegraphics[width=\textwidth]{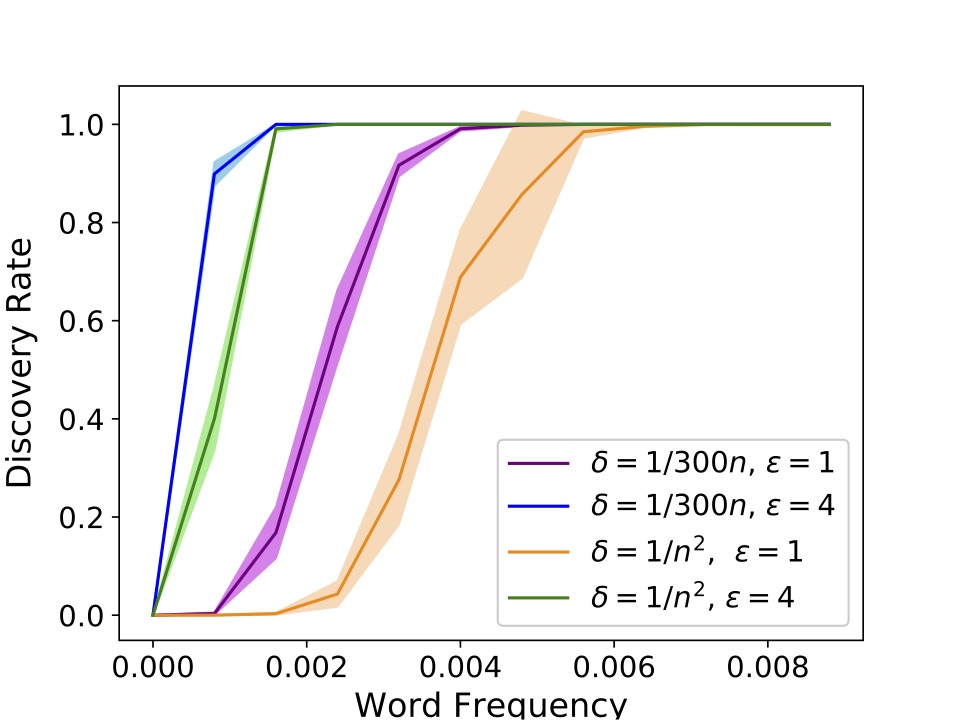}
  \caption{Sequence frequency vs. the discovery rate in the multiple words setting.}
  \label{fig-freq-multi}
  \end{minipage}
  \hfill
  \begin{minipage}[t]{0.48\textwidth}
    \includegraphics[width=\textwidth]{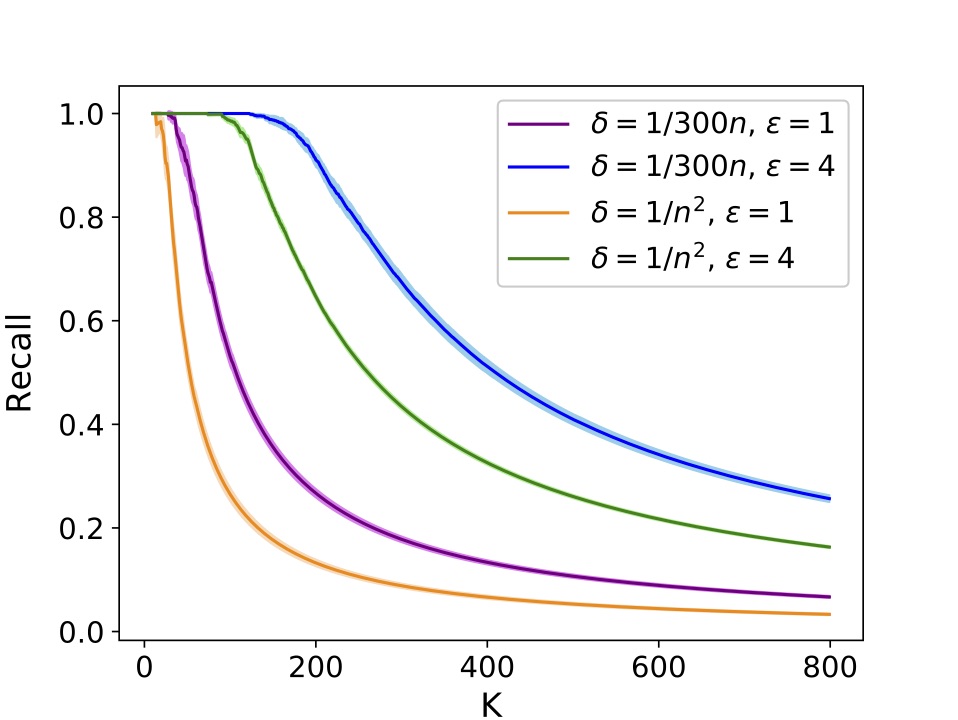}
  \caption{Recall of the top K words for different fixed $\varepsilon$ in the multiple words setting.}
  \label{fig-recall-multi}
  \end{minipage}
\end{figure}



\section{Conclusion and Open Questions}
\label{sec:conclusion}

We have introduced a novel federated algorithm for learning the frequent sequences, proved that it is inherently differentially private, investigated the trade-off between privacy and utility, and showed that it can provide excellent utility while achieving strong privacy guarantees. A significant advantage of this approach is that it eliminates the need to centralize raw data while also avoiding the harsh utility penalty of differential privacy in the local model. Many questions remain to be addressed, including (a) examining whether or not interactivity is necessary, (b) exploring secure multi-party computation and cryptographic primitives such as shuffling, threshold oblivious pseudorandom functions, and fully homomorphic encryption to provide stronger privacy guarantees, and (c) investigating the role of local plausible deniability (by allowing users to vote on wrong prefixes with small probability) and analyzing the privacy amplification gains obtained in the central model.

\bibliography{bio}
\bibliographystyle{plainnat}


\newpage

\clearpage

\appendix

\begin{center}
{\Large Supplementary Material}
\end{center}

\section{Proof of Theorem \ref{thm-dp} and  Theorem \ref{thm-dp-multi-general}}

We will show that when $n \ge 10^4$, choosing $\theta \ge 10$, $\gamma \ge 1$, and $\gamma \le \frac{\sqrt{n}}{\theta + 1}$, ensures that Algorithm \ref{trie-alg} is $\left(L \ln \left(1 + \frac{1}{\frac{\sqrt{n}}{\gamma \theta} - 1}\right), \text{ }\frac{\theta-2}{(\theta-3) \theta!}\right)$-differentially private. This theorem is proved by combining two lemmas that deal with different cases of the population. In Lemma \ref{lemma-T}, we first show a bound on the ratio between $P(\mathcal{M}_i(\mathcal{D}, \theta, \gamma, T_{i-1}) = T_i)$ and $P(\mathcal{M}_i(\mathcal{D}', \theta, \gamma, T_{i-1}) = T_i)$ for any trie $T \in Range(\mathcal{M})$ that $p_i(w) \in T_i$ . This bound depends on $k$, the number of sequences that have prefix $p_i(w)$ in $\mathcal{D}'$. It is obvious that when $k = \theta - 1$, $P(\mathcal{M}_i(\mathcal{D}', \theta, \gamma, T_{i-1}) = T_i)$ must be 0, but the number of sequences having prefix $p_i(w)$ in $\mathcal{D}$ is $\theta$, so $P(\mathcal{M}_i(\mathcal{D}, \theta, \gamma, T_{i-1}) = T_i)$ is greater than 0. In this case, the ratio between them approaches infinity. On the one hand, if the number of sequences with prefix $p_i(w)$ in $\mathcal{D}'$ is already large, then an extra $p_i(w)$ in $\mathcal{D}$ only affects the probability slightly, so the ratio between $P(\mathcal{M}_i(\mathcal{D}, \theta, \gamma, T_{i-1}) = T_i)$ and $P(\mathcal{M}_i(\mathcal{D}', \theta, \gamma, T_{i-1}) = T_i)$ is small, and it could be bounded by a small $\varepsilon$. On the other hand, if the number of sequences with $p_i(w)$ in $\mathcal{D}$ is actually small, then the probability $P(\mathcal{M}_i(\mathcal{D}, \theta, \gamma, T_{i-1}) = T_i)$ is small, and could be bounded by a reasonably small $\delta$. This case is handled by Lemma \ref{lemma-k}.

We start by calculating the probability that a prefix $p$ appears at least $\theta$ times if we randomly choose $m$ users from a pool of users of size $n$, assuming that $p$ appears $W$ times in the population.

\begin{proposition}
\label{prop}
Suppose prefix $p$ appears $W$ times in a pool of $n$ users. If we select $m$ users uniformly at random from them, then the probability that prefix $p$ is appears at least $\theta$ times is
\begin{equation*}
    \frac{1}{\binom{n}{m}} \sum_{i = \theta}^{\min\{W, m\}} \binom{W}{i} \binom{n-W}{m-i}
\end{equation*}
\end{proposition}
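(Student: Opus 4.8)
The plan is to recognize the quantity as an upper tail of a hypergeometric distribution and evaluate it by a direct count of $m$-subsets. Since the batch in a round is a uniformly random size-$m$ subset of the pool of $n$ users (sampling without replacement), every such subset occurs with probability $1/\binom{n}{m}$; it therefore suffices to count the number of $m$-subsets in which the prefix $p$ is voted for at least $\theta$ times.

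First I would partition the $n$ users into the set $G$ of the $W$ users whose sequence has prefix $p$ and the set $B$ of the remaining $n-W$ users. In the single-sequence setting a selected user contributes exactly one vote in round $i$, namely for the length-$i$ prefix of its word, so the number of occurrences of $p$ among the $m$ selected users is exactly $|G \cap \tilde{\mathcal{X}}|$, the number of chosen users drawn from $G$. An $m$-subset contains exactly $i$ users from $G$ and $m-i$ from $B$ in exactly $\binom{W}{i}\binom{n-W}{m-i}$ ways, under the convention $\binom{a}{b}=0$ whenever $b<0$ or $b>a$; this convention automatically annihilates the infeasible terms (one cannot pick more than $W$ good users, nor more than $n-W$ bad ones), so it is enough to let $i$ range over $\theta \le i \le \min\{W,m\}$.

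Summing these disjoint cases and dividing by the total number $\binom{n}{m}$ of equally likely batches yields
\[
P\bigl(p \text{ appears} \ge \theta \text{ times}\bigr) \;=\; \frac{1}{\binom{n}{m}} \sum_{i=\theta}^{\min\{W,m\}} \binom{W}{i}\binom{n-W}{m-i},
\]
which is the claimed expression. There is no substantive obstacle here; the only points requiring a word of care are the range of summation together with the vanishing-binomial convention, and the observation (used implicitly throughout Section~\ref{sec:single_sequence}) that a prefix's vote count in a round coincides with the number of sampled users holding that prefix. The same counting argument, applied with $G$ taken to be the set of users whose \emph{sampled} sequence has prefix $p$, underlies Proposition~\ref{prop:discovery_rate}: demanding this event at each of the $L$ levels and invoking independence across the freshly sampled rounds raises the per-level probability to the $L$-th power.
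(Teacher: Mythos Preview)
Your argument is correct and mirrors the paper's own proof: both recognize the count of sampled users holding prefix $p$ as hypergeometric and obtain the tail by summing $P(i)=\binom{W}{i}\binom{n-W}{m-i}/\binom{n}{m}$ over $i=\theta,\dots,\min\{W,m\}$. The extra remarks you add about the vanishing-binomial convention and the link to Proposition~\ref{prop:discovery_rate} are fine elaborations but do not change the approach.
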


\begin{proof}
The probability that a prefix $p$ appears $i$ times follows the hypergeometric distribution $P(i) =  \frac{1}{\binom{n}{m}} \binom{W}{i} \binom{n-W}{m-i}$. To calculate the probability that $p$ appears at least $\theta$ times in the chosen subset, we sum up the case that $p$ appears $\theta, \theta + 1, \dots, \min\{W, m\}$ times.
\end{proof}

The above probability expression will be useful in the proof of Lemma \ref{lemma-k} below, and when we investigate the privacy-utility trade-off in Section \ref{sec:single_sequence}. Also, Proposition \ref{prop:discovery_rate} is derived from Proposition \ref{prop}.

\begin{lemma}
\label{lemma-T}
$\forall T \in Range(\mathcal{M})$ such that $p_i(w) \in T_i$, $\forall i \in \{1, \dots, l\}$, assume there are $k$ users in $\mathcal{D}'$ that have prefix $p_i(w)$, and $k \ge \theta$. Then
 $P(\mathcal{M}_i(\mathcal{D}, \theta, \gamma, T_{i-1}) = T_i) \le \ ( 1 + \frac{\theta}{k - \theta + 1} )P(\mathcal{M}_i(\mathcal{D}', \theta, \gamma, T_{i-1}) = T_i)$.
\end{lemma}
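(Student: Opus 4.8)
The plan is to reduce the statement to a one-round, single-target-trie computation and then to a concrete inequality about hypergeometric probabilities. Fix a target trie $T$ with $p^{*}:=p_{i}(w)\in T_{i}$. Since tries are prefix-closed, $p_{i-1}(w)\in T_{i-1}$, so in round $i$ the extra user $w$ is not silent: $w$ casts exactly one vote, for $p^{*}$. Partition the users into those whose round-$i$ vote is $p^{*}$ ($k$ of them in $\mathcal{D}'$, hence $k+1$ in $\mathcal{D}$) and all the others. Conditioning on $j$, the number of $p^{*}$-voters landing in the batch of $m$, the remaining $m-j$ sampled users form a uniformly random $(m-j)$-subset of the non-$p^{*}$ users, which realizes the required pattern on the other level-$i$ prefixes with a probability $g(m-j)$ depending only on $m-j$ and not on whether we are in $\mathcal{D}$ or $\mathcal{D}'$. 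Hence, writing $N:=|\mathcal{D}'|$ and $h_{M,K}$ for the hypergeometric pmf of $K$ successes in $m$ draws from $M$ items,
\[
P(\mathcal{M}_{i}(\mathcal{D}')=T_{i})=\sum_{j\ge\theta}h_{N,k}(j)\,g(m-j),\qquad
P(\mathcal{M}_{i}(\mathcal{D})=T_{i})=\sum_{j\ge\theta}h_{N+1,k+1}(j)\,g(m-j).
\]

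Next I would use the ``$w$ in / $w$ out of the batch'' decomposition. With probability $\tfrac{N+1-m}{N+1}$ the batch drawn from $\mathcal{D}$ avoids $w$, and is then distributed exactly as a batch drawn from $\mathcal{D}'$; with probability $\tfrac{m}{N+1}$ it contains $w$, and then the other $m-1$ draws are a uniform $(m-1)$-subset of $\mathcal{D}'$ while $w$ supplies one $p^{*}$-vote for free. This gives the exact identity $P(\mathcal{M}_{i}(\mathcal{D})=T_{i})=\tfrac{N+1-m}{N+1}P(\mathcal{M}_{i}(\mathcal{D}')=T_{i})+\tfrac{m}{N+1}R$, where $R$ is the probability that a uniform $(m-1)$-subset of $\mathcal{D}'$ makes $p^{*}$ receive at least $\theta-1$ votes and realizes the target pattern on the other prefixes. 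Subtracting $P(\mathcal{M}_{i}(\mathcal{D}')=T_{i})$ reduces the Lemma to the equivalent statement $R\le\bigl(1+\tfrac{\theta(N+1)}{m(k-\theta+1)}\bigr)P(\mathcal{M}_{i}(\mathcal{D}')=T_{i})$.

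To bound $R$ I would ``add back'' the $m$-th draw: conditioned on the $(m-1)$-subset already having $j\ge\theta-1$ $p^{*}$-voters and realizing the other pattern, the full $m$-subset realizes all of $T_{i}$ as soon as the added user is one of the remaining $\ge k-j$ $p^{*}$-voters, an event of probability $\tfrac{k-j}{N-m+1}$ (the new $p^{*}$-vote bumps $p^{*}$ past $\theta$ and leaves every other prefix untouched). Coupling $R$ to $P(\mathcal{M}_{i}(\mathcal{D}')=T_{i})$ term by term through this, and simplifying, the whole claim collapses to the single inequality
\[
\sum_{j\ge\theta}h_{N,k}(j)\,g(m-j)\left(\frac{N-m+1}{\,k+1-j\,}-\frac{N+1}{\,k+1-\theta\,}\right)\le 0,
\]
together with a small boundary term $h_{N+1,k+1}(k+1)\,g(m-k-1)$ that appears only when $k<m$ and must be disposed of separately (when $k\ge m$ the displayed inequality is exactly equivalent to the Lemma).

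The main obstacle is this last inequality. Its coefficient is negative for $j\le\theta+\mu$ and positive beyond, where $\mu=mk/N=\mathbb{E}[J]$, so everything hinges on the hypergeometric mass above $\theta+\mu$ (carrying the mild positive coefficients) being outweighed by the bulk near and below $\mu$ (carrying the larger negative coefficients). This is exactly where the hypotheses are spent: the constraint $\gamma\le\tfrac{\sqrt n}{\theta+1}$ forces $m\le\tfrac{n}{\theta+1}$, hence $\mu\le\tfrac{k}{\theta+1}$ is a small fraction of $k$, so the offending tail sits at least $\theta\ge 4$ above the mean; combining this with hypergeometric concentration — e.g. via the pmf ratio $h_{N,k}(j{+}1)/h_{N,k}(j)$ and a Chernoff-type tail bound on the binomial approximation — controls the positive contribution, and the regimes $k\ge m$ and $k<m$ (the latter carrying the extra boundary term) are handled on their own. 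I expect the bookkeeping in this tail estimate, rather than the structural reductions above, to be the bulk of the proof.
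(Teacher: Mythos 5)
Your structural reductions are sound and in fact parallel the paper's proof: the paper also splits on whether the distinguished user is in the batch and conditions on the number of sampled users holding the prefix $p_i(w)$ (it does this with a counting function $C(\cdot,\mu)$ over $m$-subsets rather than with hypergeometric pmfs, which keeps the $(m)$- and $(m-1)$-subset contributions on a common denominator $\binom{n}{m}$ and avoids your ``add back the $m$-th draw'' coupling). The problem is that you stop exactly where the lemma's content lies: the displayed inequality $\sum_{j\ge\theta}h_{N,k}(j)\,g(m-j)\bigl(\frac{N-m+1}{k+1-j}-\frac{N+1}{k+1-\theta}\bigr)\le 0$ is asserted as ``the main obstacle'' and never proved, and the route you sketch for it (hypergeometric concentration plus a Chernoff-type tail bound) is the wrong tool. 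Concentration arguments cannot recover the exact constant $1+\frac{\theta}{k-\theta+1}$ --- note that at $k=\theta$ this factor is $1+\theta$, which is large and tight, and no mean-versus-tail argument will produce it --- and, tellingly, your sketch spends the hypotheses $\gamma\le\frac{\sqrt{n}}{\theta+1}$ and $\theta\ge 4$, which the lemma does not assume and which the paper's proof of this lemma never uses.

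The missing idea is elementary and makes all the tail analysis unnecessary. Comparing the two batch decompositions, the \emph{only} excess in $P(\mathcal{M}_i(\mathcal{D})=T_i)$ over $P(\mathcal{M}_i(\mathcal{D}')=T_i)$ is the event that $w$ is in the batch \emph{and} exactly $\theta-1$ of the other $k$ holders of $p_i(w)$ are in the batch (with $\theta$ or more, the configuration is already counted on the $\mathcal{D}'$ side). In counting form this excess is $\binom{k}{\theta-1}\,C(\mathcal{Z}-\mathcal{W},\,m-\theta\mid p_i(w))$. But $\binom{k}{\theta}\,C(\mathcal{Z}-\mathcal{W},\,m-\theta\mid p_i(w))$ is literally one term in the expansion of $C(\mathcal{Z},m)$, so the excess is at most $\frac{\binom{k}{\theta-1}}{\binom{k}{\theta}}\,C(\mathcal{Z},m)=\frac{\theta}{k-\theta+1}\,C(\mathcal{Z},m)\le\frac{\theta}{k-\theta+1}\binom{n}{m}P(\mathcal{M}_i(\mathcal{D}')=T_i)$, which is the lemma. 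This term-by-term comparison replaces your entire tail estimate, needs only $k\ge\theta$, and also disposes of the $k<m$ boundary term you flag, since it never separates cases on $k$ versus $m$.
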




\begin{proof}

Let $C(\mathcal{D}, \mu, \theta, \gamma, T_{in}, T_{out})$ be a function to count the number of ways to choose $\mu$ users (denote the set of chosen users as $\tilde{\mathcal{X}}$) from a set of users $\mathcal{D}$, that using Algorithm \ref{vote-alg}, $\mathcal{V}(\tilde{\mathcal{X}}, T_{in}, \theta, i) = T_{out}$. Also, we denote $C(\mathcal{D}, \mu, \theta, \gamma, T_{in}, T_{out} | p_i(w))$ as the number of ways to choose users under the same condition, given prefix $p_i(w)$ is added to $T_{out}$ in this step.

Remember $\mathcal{D}$ and $\mathcal{D}'$ differ in only one sequence $w$ and $w'$, that $|w| = l$, $|w'| = 0$. We denote $w$'s prefix of length $i$ as $p_i(w)$. For any output trie $T \in Range(\mathcal{M})$, consider the step to grow $T_{i}$ from $T_{i-1}$ by $M_i$. Let $\mathcal{Z} = \mathcal{D} - \{w\} = \mathcal{D}' - \{w'\}$. We assume there are $k$ users in $\mathcal{D}'$ that have prefix $p_i(w)$ of $w$. We denote this subset of users in $\mathcal{D}'$ as $\mathcal{W}$. Thus the set of users in $\mathcal{D}$ that have prefix $p_i(w)$ is $\mathcal{W} + \{w\}$ with size $k + 1$.

We abuse the notation to use $C(\mathcal{D}, \mu)$ instead of $C(\mathcal{D}, \mu, \theta, \gamma, T_{in}, T_{out})$, and  $C(\mathcal{D}, \mu | p_i(w))$ instead of $C(\mathcal{D}, \mu, \theta, \gamma, T_{in}, T_{out} | p_i(w))$ for fixed $\theta$, $\gamma$, $T_{in}$, $T_{out}$.

We calculate $P(\mathcal{M}_i(\mathcal{D}, \theta, \gamma, T_{i-1}) = T_i)$ by the ratio between $C(\mathcal{D}, m, \theta, \gamma, T_{i-1}, T_i)$ (how many ways to choose $m$ users from $\mathcal{D}$, that $\mathcal{V}(\mathcal{D}, T_{in}, \theta, i)$ returns $T_{out}$ and $\binom{n}{m}$ (how many ways to choose $m$ users from $\mathcal{D}$).

Also, we could separate $C(\mathcal{D}', m, \theta, \gamma, T_{i-1}, T_i)$ into two parts: not choosing $w'$ (taking all $m$ users from $\mathcal{D}' - \{w'\}$), or choosing $w'$ (taking the rest $m-1$ users from $\mathcal{D}' - \{w'\}$). Thus,

\begin{align*}
&P(\mathcal{M}_i(\mathcal{D}', \theta, \gamma, T_{i-1}) = T_i) = \frac{C(\mathcal{D}', m, \theta, \gamma, T_{i-1}, T_i)}{\binom{n}{m}} \\
&= \frac{1}{\binom{n}{m}}(C(\mathcal{D}' - \{w'\}, m) + C(\mathcal{D}' - \{w'\}, m-1)) \\
&= \frac{1}{\binom{n}{m}}(C(\mathcal{Z}, m) + C(\mathcal{Z}, m-1))
\stepcounter{equation}\tag{\theequation}\label{eq-for-multi1}
\end{align*} 

Consider $C(\mathcal{Z}, m-1, \theta, \gamma, T_{i-1}, T_i)$, because $p_i(w) \in T_i$, so there must be at least $\theta$ users in the chosen set voting for $p_i(w)$. We consider the following cases separately: choosing $\theta$ users from $\mathcal{W}$, $m-1-\theta$ users from $\mathcal{Z}-\mathcal{W}$ (note that $p_i(w) \in T_i$ is already guaranteed by choosing $\theta$ users from $\mathcal{W}$, so we consider $p_i(w)$ as a given condition here), and choosing $\theta + 1$ users from $\mathcal{W}$, $m-\theta-1$ users from $\mathcal{Z}-\mathcal{W}$, $\dots$, i.e.,
\begin{equation*}
C(\mathcal{Z}, m-1) = \sum_{i=\theta}^{\min\{k, m\}} \binom{k}{i} C(\mathcal{Z}-\mathcal{W}, m-i-1 | p_i(w))
\end{equation*}

Similarly for $C(\mathcal{Z}, m)$, not choosing $w$ (taking all $m$ users from $\mathcal{D} - \{w\}$) or choosing $w$ (taking the rest $m-1$ users from $\mathcal{D}' - \{w\}$).

\begin{align*}
C(\mathcal{Z}, m) &= \binom{k}{\theta} C(\mathcal{Z}-\mathcal{W}, m-\theta | p_i(w))\\
&+ \sum_{i = \theta+1}^ {\min\{k, m\}} \binom{k}{i} C(\mathcal{Z}-\mathcal{W}, m-i | p_i(w))
\end{align*}

Thus,
\begin{equation}
\label{eq-k}
C(\mathcal{Z}-\mathcal{W}, m-\theta | p_i(w)) \le \frac{1}{\binom{k}{\theta}} C(\mathcal{Z}, m)
\end{equation}

$C(\mathcal{D}, m, \theta, \gamma, T_{i-1}, T_i)$  could also be considered as not choosing $w$ (taking all $m$ users from $\mathcal{D} - \{w\}$) or choosing $w$ (taking the rest $m-1$ users from $\mathcal{D} - \{w\}$). But different from $\mathcal{D}'$, if $w \in \mathcal{D}$ is chosen, we can choose $\theta - 1$ to $k$ users contain prefix $p_i(w)$ from $\mathcal{Z} - \mathcal{W}$. Thus,

\begin{align*}
&C(\mathcal{D}, m, \theta, \gamma, T_{i-1}, T_i)\\
&= C(\mathcal{Z}, m)
+ \binom{k}{\theta-1} C(\mathcal{Z}-\mathcal{W}, m-\theta | p_i(w))\\
&+ \sum_{i = \theta} ^ {\min\{k, m\}} \binom{k}{i} C(\mathcal{Z}-\mathcal{W}, m-i-1 | p_i(w))\\
&= C(\mathcal{Z}, m)
+ \binom{k}{\theta-1} C(\mathcal{Z}-\mathcal{W}, m-\theta | p_i(w))\\
&+ C(\mathcal{Z}, m-1) \stepcounter{equation}\tag{\theequation}\label{eq-c}
\end{align*}

By Equation \ref{eq-c} and Inequality \ref{eq-k},
\begin{align*}
&P(\mathcal{M}_i(\mathcal{D}, \theta, \gamma, T_{i-1}) = T_i) = \frac{C(\mathcal{D}, m, \theta, \gamma, T_{i-1}, T_i)}{\binom{n}{m}} \\
&= \frac{1}{\binom{n}{m}}(C(\mathcal{Z}, m) + \binom{k}{\theta-1} C(\mathcal{Z}-\mathcal{W}, m-\theta | p_i(w))\\
&+ C(\mathcal{Z}, m-1))\\
&\le \frac{1}{\binom{n}{m}}(C(\mathcal{Z}, m) + C(\mathcal{Z}, m-1) + \frac{\binom{k}{\theta-1}}{\binom{k}{\theta}}C(\mathcal{Z}, m))\\
&= \frac{1}{\binom{n}{m}}(C(\mathcal{Z}, m) + C(\mathcal{Z}, m-1) \\
&+ \frac{\theta}{k - \theta + 1}C(\mathcal{Z}, m)) \\
&\le (1 + \frac{\theta}{k - \theta + 1}) \frac{1}{\binom{n}{m}}(C(\mathcal{Z}, m) + C(\mathcal{Z}, m-1))\\
&= (1 + \frac{\theta}{k - \theta + 1})P(\mathcal{M}_i(\mathcal{D}', \theta, \gamma, T_{i-1}) = T_i)
\end{align*}

\end{proof}

Suppose there are $k$ users has prefix $p_1(w)$. In Lemma \ref{lemma-k}, we show that when $k \le \frac{\sqrt{n}}{\gamma} - 1$, $P(p_1(w) \in \mathcal{M}_1(\mathcal{D}, \theta, \gamma)) \le \frac{\theta - 2}{(\theta - 3) \theta !}$. This means when $k$ is small, the probability that $p_1(w) \in \mathcal{M}_1(\mathcal{D}, \theta, \gamma)$ is small, so it could be bounded by a small $\delta$. And it is the same for the $i^th$ round that when there are $k$ users has prefix $p_i(w)$. If $k \le \frac{\sqrt{n}}{\gamma} - 1$, then $P(p_{i-1}(w) \in \mathcal{M}(\mathcal{D}, \theta, \gamma)|p_{i-2}(w) \in \mathcal{M}(\mathcal{D}, \theta, \gamma)) \le \frac{\theta-2}{(\theta - 3) \theta !}$.

\begin{lemma}
\label{lemma-k}
Consider the step to grow $T_{i}$ from $T_{i-1}$ by $\mathcal{M}_i$. We assume there are $k$ users in $\mathcal{D}'$ that have prefix $p_i(w)$ of $w$. Then there are $k+1$ users in $\mathcal{D}$ that have prefix $p_i(w)$. When $k \le \frac{\sqrt{n}}{\gamma} - 1$, $4 \le \theta \le \sqrt{n}$, $\gamma \ge 1$, $P(p_{i}(w) \in \mathcal{M}(\mathcal{D}, \theta, \gamma)|p_{i-1}(w) \in \mathcal{M}(\mathcal{D}, \theta, \gamma)) \le \frac{\theta-2}{(\theta - 3) \theta !}$. For the first step, $P(p_1(w) \in \mathcal{M}_1(\mathcal{D}, \theta, \gamma)) \le \frac{\theta - 2}{(\theta-3) \theta !}$.
\end{lemma}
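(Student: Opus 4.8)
The plan is to reduce the statement to a standard hypergeometric tail bound and then read it off from Proposition~\ref{prop}. First I would fix the trie $T_{i-1}$ that the algorithm holds entering round $i$ and condition on the event $p_{i-1}(w)\in T_{i-1}$ (for the first-step claim this conditioning is vacuous, with $T_0=\{\mathrm{root}\}$). In round $i$, Algorithm~\ref{trie-alg} draws a fresh set of $m=\gamma\sqrt n$ users uniformly at random (a new draw each round), and by Algorithm~\ref{vote-alg} a drawn user increments $\mathrm{Candidates}[p_i(w)]$ exactly when its sequence has $p_i(w)$ as a prefix (such a user is active since it then also has prefix $p_{i-1}(w)\in T_{i-1}$). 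Hence the number of votes for $p_i(w)$ is precisely the number of drawn users holding a sequence with prefix $p_i(w)$, and since a length-$i$ node can only be created in round $i$ and nodes are never removed, conditionally on $p_{i-1}(w)\in T_{i-1}$ we have $p_i(w)\in\mathcal M(\mathcal D,\theta,\gamma)$ iff that count is $\ge\theta$. There are $W:=k+1$ users in $\mathcal D$ with prefix $p_i(w)$, so by Proposition~\ref{prop} this conditional probability equals $\frac{1}{\binom n m}\sum_{j=\theta}^{\min\{W,m\}}\binom W j\binom{n-W}{m-j}$, and since this value does not depend on $T_{i-1}$ beyond the conditioning, averaging over $T_{i-1}$ leaves it unchanged.

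Next I would bound this tail using the parameter budget. The hypothesis $k\le \sqrt n/\gamma-1$ gives $W\le\sqrt n/\gamma$, hence the key inequality $\tfrac{Wm}{n}\le\tfrac{(\sqrt n/\gamma)(\gamma\sqrt n)}{n}=1$; also $k\ge 0$ forces $\gamma\le\sqrt n$, so $m\le n$. The event ``at least $\theta$ of the $W$ special users are drawn'' is contained in the union over the $\binom W\theta$ size-$\theta$ subsets $S$ of these users of the events ``every member of $S$ is drawn,'' and each such event has probability $\binom{n-\theta}{m-\theta}/\binom n m=\prod_{t=0}^{\theta-1}\frac{m-t}{n-t}\le(m/n)^\theta$ because $m\le n$. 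A union bound then gives
\begin{equation*}
P\bigl(p_i(w)\in\mathcal M(\mathcal D,\theta,\gamma)\,\big|\,p_{i-1}(w)\in\mathcal M(\mathcal D,\theta,\gamma)\bigr)\ \le\ \binom W\theta\Bigl(\frac m n\Bigr)^{\!\theta}\ \le\ \frac{1}{\theta!}\Bigl(\frac{Wm}{n}\Bigr)^{\!\theta}\ \le\ \frac{1}{\theta!}.
\end{equation*}
(Alternatively, bounding each summand by $\tfrac1{j!}\bigl(\tfrac{Wm}{n}\bigr)^j\le\tfrac1{j!}$ and summing $\sum_{j\ge\theta}\tfrac1{j!}\le\tfrac{\theta+1}{\theta\,\theta!}$ also works; either route yields a bound of order $\tfrac1{\theta!}$.)

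Finally, since $\theta\ge4$ we have $\theta-3\ge1$ and $\theta-2\ge2$, so $\frac{\theta-2}{\theta-3}\ge1$ and therefore $\frac{1}{\theta!}\le\frac{\theta-2}{(\theta-3)\,\theta!}$, which is the claimed bound; taking $i=1$ (no conditioning) gives the first-step statement. I expect the only delicate part to be the reduction in the first paragraph: one must argue that, conditioned only on ``$p_{i-1}(w)$ has been learned,'' the event for round $i$ does not otherwise depend on the history or on the rest of $T_{i-1}$, so that the probability collapses to the clean hypergeometric expression of Proposition~\ref{prop}. The tail estimate itself is routine, its one substantive input being the inequality $Wm/n\le1$ that the regime $m=\gamma\sqrt n$, $k+1\le\sqrt n/\gamma$ was designed to guarantee.
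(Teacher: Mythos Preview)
Your proposal is correct and takes a genuinely different route from the paper's own proof. Both reduce to the hypergeometric tail from Proposition~\ref{prop} in the same way, but the paper then controls the tail by showing the ratio of consecutive terms $\binom{k+1}{j+1}\binom{n-k-1}{m-j-1}\big/\binom{k+1}{j}\binom{n-k-1}{m-j}$ is at most $r_1\le\frac{1}{\theta-2}$, sums the resulting geometric series, and separately bounds the leading term by $\frac{1}{\theta!}$ via the product $\prod_{i=0}^{\theta-1}\frac{n-(\gamma+\gamma^{-1})\sqrt n\,i+i^2}{n-i}\le 1$; this yields exactly $\frac{1}{(1-r_1)\theta!}\le\frac{\theta-2}{(\theta-3)\theta!}$. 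You instead use the elementary union bound over the $\binom{W}{\theta}$ size-$\theta$ subsets of special users together with $\prod_{t=0}^{\theta-1}\frac{m-t}{n-t}\le(m/n)^\theta$ and $Wm/n\le 1$, landing directly at $\frac{1}{\theta!}$, which you then weaken to $\frac{\theta-2}{(\theta-3)\theta!}$ using $\theta\ge 4$. Your argument is shorter, uses only the single structural fact $Wm\le n$, and actually gives a slightly sharper constant; the paper's ratio analysis, by contrast, explains where the specific factor $\frac{\theta-2}{\theta-3}$ in Theorem~\ref{thm-dp} comes from and makes visible how the bound degrades as $\gamma$ approaches $\sqrt n/\theta$. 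One small remark: your justification ``$k\ge 0$ forces $\gamma\le\sqrt n$'' is really the observation that in the nontrivial case $W\ge\theta$ one has $\theta\le W\le\sqrt n/\gamma$, hence $\gamma\le\sqrt n/\theta\le\sqrt n$ and so $m\le n$; you might state it that way for clarity.
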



\begin{proof}

First $P(p_1(w) \in \mathcal{M}(\mathcal{D}, \theta, \gamma)) \le P(p_1(w) \in \mathcal{M}_1(\mathcal{D}, \theta, \gamma, T_0))$. To calculate $P(p_1(w) \in \mathcal{M}_1(\mathcal{D}, \theta, \gamma, T_0))$, we consider the cases of choosing $\theta$ to $k+1$ users voting for $p_i(w)$ separately, By Proposition \ref{prop},

\begin{align*}
&P(p_1(w) \in \mathcal{M}_1(\mathcal{D}, \theta, \gamma, T_0))\\
&= \frac{1}{\binom{n}{m}} \sum_{i = \theta}^{\min\{k+1, m\}} \binom{k+1}{i} \binom{n-k-1}{m -i}
\end{align*}

Note that when $k+1 < \theta$, $P(p_1(w) \in \mathcal{M}_1(\mathcal{D}, \theta, \gamma, T_0)) = 0$, so we only consider the case that $k+1 \ge \theta$. The sum of the array above could be upper bounded by the sum of a geometric sequence. We know that $k \le \frac{\sqrt{n}}{\gamma} - 1$, $m=\gamma \sqrt{n}$. Consider the ratio between the first two items,\\
\begin{align}
  \frac{ \binom{k+1}{\theta+1} \binom{n-k-1}{m-\theta-1} }{ \binom{k+1}{\theta} \binom{n-k-1}{m-\theta} } &= \frac{ (k-\theta+1) (m-\theta) }{ (\theta+1) (n-k-m+\theta) } \nonumber\\
   		  &\le \frac{ (\frac{\sqrt{n}}{\gamma}-\theta) (\gamma \sqrt{n}-\theta) }{ (\theta+1) (n-\frac{\sqrt{n}}{\gamma}+1-\gamma \sqrt{n}+\theta) } \nonumber\\
   		  &\le \frac{ n }{ (\theta + 1) (n - (\gamma + \frac{1}{\gamma}) \sqrt{n} +1+\theta) } \nonumber\\
   		  &\le \frac{1}{ (\theta + 1) (1 - \frac{\gamma + \frac{1}{\gamma}}{\sqrt{n}}) } \label{eq-1}
  \end{align}

We denote $\frac{1}{ (\theta + 1) (1 - \frac{\gamma + \frac{1}{\gamma}}{\sqrt{n}}) }$ as $r_1$. Because $k \le \frac{\sqrt{n}}{\gamma} - 1$ and $k+1 \ge \theta$, so $\gamma \le \frac{\sqrt{n}}{k+1} \le \frac{\sqrt{n}}{\theta}$. We know that $\gamma \ge 1$, then $\gamma + \frac{1}{\gamma} \le \gamma + 1 \le \frac{\sqrt{n}}{\theta}+1$. And also $\theta \le \sqrt{n}$, so $\gamma + \frac{1}{\gamma} \le \frac{\sqrt{n}}{\theta}+1 \le 2\frac{\sqrt{n}}{\theta}$. Now we are able to bound $r_1$:

  \begin{align*}
    r_1 &= \frac{1}{ (\theta + 1) (1 - \frac{\gamma + \frac{1}{\gamma}}{\sqrt{n}}) }\\
        &\le \frac{1}{ (\theta + 1) (1 - \frac{\frac{2\sqrt{n}}{\theta}}{\sqrt{n}} ) }\\
        &\le \frac{1}{ (\theta + 1) (1 - \frac{2}{\theta} ) }\\
        &\le \frac{1}{ \theta (1 - \frac{2}{\theta} ) }\\
        &= \frac{1}{ \theta - 2}
  \end{align*}
  
Note that when $\theta \ge 4$, $r_1 < 1$. 

Now we'll show that the ratio between adjacent items is decreasing. Consider the ratio between any two adjacent items $\binom{k+1}{\theta+i+1} \binom{n-k-1}{m-\theta-i-1}$ and $\binom{k+1}{\theta+i} \binom{n-k-1}{m-\theta-i}$,

\begin{align*}
\frac{\binom{k+1}{\theta+i+1} \binom{n-k-1}{m-\theta-i-1}}{\binom{k+1}{\theta+i} \binom{n-k-1}{m-\theta-i}} &= \frac{(k-\theta-i+1)(m-\theta-i)}{(\theta+i+1)(n-k-m+\theta+i)} \\
		   &\le \frac{ (k-\theta+1) (m-\theta) }{ (\theta+1) (n-k-m+\theta) } \\
		   &= \frac{ \binom{k+1}{\theta+1} \binom{n-k-1}{m-\theta-1} }{ \binom{k+1}{\theta} \binom{n-k-1}{m-\theta} } \\
		   &\le r_1
\end{align*}

Thus,

\begin{align*}
 &P(p_1(w) \in \mathcal{M}_1(\mathcal{D}, \theta, \gamma, T_0))\\
 &= \frac{1}{\binom{n}{m}} \sum_{i = \theta}^{k+1}  \binom{k+1}{i} \binom{n-k-1}{m - i}\\
 &\le \frac{1}{\binom{n}{m}} (\sum_{i = 0}^{k+1-\theta} r_1 ^ i) \binom{k+1}{\theta} \binom{n-k-1}{m-\theta} \\
 &\le \frac{1}{\binom{n}{m}} \frac{1}{1 - r_1} \binom{k+1}{\theta} \binom{n-k-1}{m-\theta}\\
 \end{align*}
 
The last line follows because $r_1 < 1$.

 When $k \le \frac{\sqrt{n}}{\gamma} - 1$,
   \begin{align}
   &P(p_1(w) \in \mathcal{M}_1(\mathcal{D}, \theta, \gamma, T_0)) \nonumber\\
   &\le \frac{1}{\binom{n}{m}} \frac{1}{1 - r_1} \binom{k+1}{\theta} \binom{n-k-1}{m-\theta} \nonumber\\
   &\le \frac{1}{(1 - r_1)\theta !} \times \prod_{i=0}^{\theta-1}(k+1-i) \times \prod_{i=0}^{\theta-1}(m-i)\times \frac{1}{\prod_{i=0}^{\theta-1}(n-i)} \nonumber\\
   &\le \frac{1}{(1 - r_1)\theta !} \times \prod_{i=0}^{\theta-1}(\frac{\sqrt{n}}{\gamma}-i) \times \prod_{i=0}^{\theta-1}(\gamma \sqrt{n}-i)\times \frac{1}{\prod_{i=0}^{\theta-1}(n-i)} \nonumber\\
   &\le \frac{1}{(1 - r_1)\theta !} \times \prod_{i=0}^{\theta-1} \frac{n - (\frac{1}{\gamma} + \gamma) \sqrt{n} i + i ^ 2}{n-i} \nonumber\\
   &\le \frac{1}{(1 - r_1)\theta !} \nonumber
   \end{align}

  Thus,
  \begin{equation*}
    P(p_1(w) \in \mathcal{M}_1(\mathcal{D}, \theta, \gamma, T_0)) \le \frac{1}{(1 - r_1)\theta !} \le \frac{\theta - 2}{(\theta  - 3)\theta !}
  \end{equation*}

 We could get the same upper bound for $P(p_{i-1}(w) \in \mathcal{M}(\mathcal{D}, \theta, \gamma)|p_{i-2}(w) \in \mathcal{M}(\mathcal{D}, \theta, \gamma))$ when there are $k+1$ users containing prefix $p_i(w)$, because it is also a one step voting process to determine if $p_i(w)$ will grow on the trie.

\end{proof}

\subsection{Proof of Theorem \ref{thm-dp}}
\begin{proof}
By definition, algorithm $\mathcal{M}$ has $(\varepsilon, \delta)$-differential privacy means that, $\forall \mathcal{S} \subseteq \text{Range}(\mathcal{M})$,

\begin{equation}
\label{thm-dp-eq1}
P(\mathcal{M}(\mathcal{D}) \in \mathcal{S}) \le e^\varepsilon P(\mathcal{M}(\mathcal{D}') \in \mathcal{S}) + \delta
\end{equation}

When we choose the same fixed $\theta$ and $\gamma$ for a certain $n$, we abuse the notation to use $\mathcal{M}(\mathcal{D})$ and $\mathcal{M}_i(\mathcal{D}, T)$ instead of $\mathcal{M}(\mathcal{D}, \theta, \gamma)$ and $\mathcal{M}_i(\mathcal{D}, \theta, \gamma, T)$.

Suppose $|w| = l$. We decompose $\mathcal{S}$ into $\mathcal{S} = \mathcal{S}_0 \cup \mathcal{S}_1 \cup \dots \mathcal{S}_l$. $\mathcal{S}_0$ is the subset of $\mathcal{S}$ that contains no prefix of $w$, $\mathcal{S}_1$ is the subset of $\mathcal{S}$ that contains only $p_1(w)$,  $\mathcal{S}_i$ is the subset of $\mathcal{S}$ that only $p_1(w)$ to $p_i(w)$ of $w$. Formally, $\mathcal{S}_0 = \{ T \in \mathcal{S} | p_i(w) \notin T, \text{ for } i = 1, 2, \dots, l \}$ and $\mathcal{S}_i = \{ T \in \mathcal{S} | p_1(w), \dots, p_i(w) \in T \text{ and } p_{i+1}, \dots, p_l \notin T \}$ for $i = 1, 2, \dots, l$. Then Inequality \ref{thm-dp-eq1} is equivalent to,

\begin{equation*}
\sum\limits_{i=0}^l P(\mathcal{M}(\mathcal{D}) \in \mathcal{S}_i) \le e^\varepsilon \sum\limits_{i=0}^l P(\mathcal{M}(\mathcal{D}') \in \mathcal{S}_i) + \delta
\end{equation*}

Because the tries in $\mathcal{S}_0$ do not have any node in the path of $w = (root, c1, c2, \dots, c_l)$,
\begin{equation*}
P(\mathcal{M}(\mathcal{D}) \in \mathcal{S}_0) \le P(\mathcal{M}(\mathcal{D}') \in \mathcal{S}_0)
\end{equation*}

 We define $\mathcal{R}_i = \{ T \in \mathcal{R} | p_i(w) \in T \}$. Note that different from $\mathcal{S}_i$, $\mathcal{R}_i$ contains all possible tries that contain $p_i(w)$ (including those contain $p_{i+1}, p_{i+2}, \dots$). Thus, $\mathcal{R}_l \subseteq \mathcal{R}_{l-1} \subseteq \dots \subseteq \mathcal{R}_2 \subseteq \mathcal{R}_1$, therefore $P(\mathcal{M}(\mathcal{D}) \in \mathcal{R}_1) \ge P(\mathcal{M}(\mathcal{D}) \in \mathcal{R}_2) \ge \dots \ge P(\mathcal{M}(\mathcal{D}) \in \mathcal{R}_l)$. Let $j$ be the smallest index that $P(\mathcal{M}(\mathcal{D}) \in \mathcal{R}_j) \le \frac{\theta - 2}{(\theta  - 3)\theta !}$, if such $j$ exists. Then,

\begin{equation*}
\sum_{i=j}^l P(\mathcal{M}(\mathcal{D}) \in \mathcal{S}_i) \le P(\mathcal{M}(\mathcal{D}) \in \mathcal{R}_j) \le \frac{\theta - 2}{(\theta  - 3)\theta !}
\end{equation*}

For indexes $i < j$, we know that $P(\mathcal{M}(\mathcal{D}) \in \mathcal{R}_i) > \frac{\theta - 2}{(\theta  - 3)\theta !}$. For any $i < j$:

\begin{equation*}
P(\mathcal{M}(\mathcal{D}) \in \mathcal{R}_i) = P(p_i(w) \in \mathcal{M}(\mathcal{D})) \\
= P(p_1(w) \in \mathcal{M}(\mathcal{D}))
\times \prod_{j=1} ^ {i-1} P(p_{j+1} \in \mathcal{M}(\mathcal{D})|p_j \in \mathcal{M}(\mathcal{D}))
\end{equation*}

Because $P(\mathcal{M}(\mathcal{D}) \in \mathcal{R}_i) > \frac{\theta - 2}{(\theta  - 3)\theta !}$, it must be the case that every term on the right hand side of the equation above is greater than $ \frac{\theta - 2}{(\theta  - 3)\theta !}$, i.e., $P(p_1(w) \in \mathcal{M}(\mathcal{D})) \ge \frac{\theta-2}{(\theta-3)\theta!}$, and $P(p_{i-1}(w) \in \mathcal{M}(\mathcal{D})|p_{i-2}(w) \in \mathcal{M}(\mathcal{D})) \ge \frac{\theta-2}{(\theta-3)\theta!}$, $\forall i \in \{2, \dots, j-1\}$. By  Lemma \ref{lemma-k}, $k_i > \frac{\sqrt{n}}{\gamma} - 2$, because $k_i$ is integer, $k_i \ge \frac{\sqrt{n}}{\gamma} - 1 \ge \theta$ (because $\gamma \le \frac{\sqrt{n}}{\theta + 1}$). $\forall i \in \{1, \dots, j-1\}$.

For each $T \in \mathcal{S}_i$, $i \in \{1, \dots, j-1\}$,

\begin{equation*}
P(\mathcal{M}(\mathcal{D}') = T) = \prod_{b=1}^i P(\mathcal{M}_b(\mathcal{D}', T_{b-1}) = T_b)
\end{equation*}

Because $k_i > \frac{\sqrt{n}}{\gamma} - 2 \ge \theta$ for all $i \in \{1, \dots, j-1\}$, by Lemma \ref{lemma-T},

\begin{align*}
&P(\mathcal{M}(\mathcal{D}) = T) \\
&= \prod_{b=1}^i P(\mathcal{M}_b(\mathcal{D}, T_{b-1}) = T_b) \\
&\le \prod_{b=1}^i (1 + \frac{\theta}{k_b-\theta+1}) P(\mathcal{M}_b(\mathcal{D}', T_{b-1}) = T_b)\\
&\le (1 + \frac{\theta}{\frac{\sqrt{n}}{\gamma} - 1 - \theta +1}) ^ i P(\mathcal{M}(\mathcal{D}') = T)\\
&= (1 + \frac{1}{\frac{\sqrt{n}}{\gamma \theta} - 1}) ^ i P(\mathcal{M}(\mathcal{D}') = T)
\end{align*}

Sum up for all $T \in \mathcal{S}_1 \cup \mathcal{S}_2 \cup \dots \cup \mathcal{S}_{j-1}$,

\begin{align*}
&P(\mathcal{M}(\mathcal{D}) \in \mathcal{S}_1 \cup \mathcal{S}_2 \cup \dots \cup \mathcal{S}_{j-1}) \\
&\le (1 + \frac{1}{\frac{\sqrt{n}}{\gamma \theta} - 1}) ^ {j-1} P(\mathcal{M}(\mathcal{D}') \in \mathcal{S}_1 \cup \mathcal{S}_2 \cup \dots \cup \mathcal{S}_{j-1})
\end{align*}

\begin{align*}
&P(\mathcal{M}(\mathcal{D}) \in \mathcal{S}) \\
&= \sum_{i = 1}^{j-1} P(\mathcal{M}(\mathcal{D}) \in \mathcal{S}_i) + \sum_{i = j}^l P(\mathcal{M}(\mathcal{D}) \in \mathcal{S}_i) \\
&\le (1 + \frac{1}{\frac{\sqrt{n}}{\gamma \theta} - 1}) ^ {j-1} \sum_{i = 1}^{j-1} P(\mathcal{M}(\mathcal{D}') \in \mathcal{S}_i) + \frac{\theta-2}{(\theta-3)\theta!} \\
&\le (1 + \frac{1}{\frac{\sqrt{n}}{\gamma \theta} - 1}) ^ l \sum_{i = 1}^l P(\mathcal{M}(\mathcal{D}') \in \mathcal{S}_i) + \frac{\theta-2}{(\theta-3)\theta!}  \\
&= (1 + \frac{1}{\frac{\sqrt{n}}{\gamma \theta} - 1}) ^ l P(\mathcal{M}(\mathcal{D}') \in \mathcal{S}) + \frac{\theta-2}{(\theta-3)\theta!} \\
&= (1 + \frac{1}{\frac{\sqrt{n}}{\gamma \theta} - 1}) ^ L P(\mathcal{M}(\mathcal{D}') \in \mathcal{S}) + \frac{\theta-2}{(\theta-3)\theta!}
\end{align*}

\end{proof}

\subsection{Proof of Theorem \ref{thm-dp-multi-general} }
\begin{proof}
Suppose $\mathcal{D}$ and $\mathcal{D}'$ are user-level neighboring datasets. Without loss of generality, assume $\mathcal{D} = \{D_1, D_2, \dots, D_n \}$ and $\mathcal{D}' = \{D_1', D_2, \dots, D_n \}$. Let $\tilde{M}$ denote the process to first randomly selects 1 record per user (deterministically or randomly) and then applies $M$ on the sampled dataset of size $n$. 

Because $M$ satisfies $(\varepsilon, \delta)$ record level DP, we know that for any record level neighboring datasets $d$ and $d'$, and $\forall \mathcal{S} \subseteq \text{Range}(M)$,

$$P(M(d) \in \mathcal{S}) \le e^{\varepsilon} \times P(M(d') \in \mathcal{S}) + \delta$$

For any record level neighboring datasets $d$ and $d'$, without loss of generality, we will foucs on neighboring datasets that differ in the first record: $d = d_1 d_2 \dots d_n$ and $d' = d_1' d_2 \dots d_n$.

Our goal is to prove that $\forall \mathcal{S} \subseteq \text{Range}(\tilde{M})$,  $P(\tilde{M}(D) \in \mathcal{S}) \le e^{\varepsilon} \times P(\tilde{M}(D') \in \mathcal{S}) + \delta$. Denote $P(d|D)$ as the probability of sampling $d$ from $D$, then we can write $P(\tilde{M}(D) \in \mathcal{S})$ as:

\begin{align*}
    P(\tilde{M}(D) \in \mathcal{S}) 
    &= \sum_{d} P(M(d) \in \mathcal{S}) \times P(d|D)\\
    &= \sum_{d_1 d_2 \dots d_n} P(M(d_1 d_2 \dots d_n) \in \mathcal{S}) \times P(d_1 d_2 \dots d_n|D)\\
    &= \sum_{d_1} \sum_{d_2 \dots d_n} P(M(d_1 d_2 \dots d_n) \in \mathcal{S}) \times P(d_1 | D_1) \times P(d_2 \dots d_n | D_2 \dots D_n)\\
    &= \sum_{d_2 \dots d_n} [\sum_{d_1} P(M(d_1 d_2 \dots d_n) \in \mathcal{S}) \times P(d_1 | D_1)] \times P(d_2 \dots d_n | D_2 \dots D_n)\\
\end{align*}

Now we bound $\sum_{d_1} P(M(d_1 d_2 \dots d_n) \in \mathcal{S}) \times P(d_1 | D_1)$, and then finish the proof. For any $d_1$ and $d_1'$, we know that $P(M(d_1 d_2 \dots d_n) \in \mathcal{S}) \le e^{\varepsilon} \times P(M(d_1' d_2 \dots d_n) \in \mathcal{S}) + \delta$. Thus, for a fixed $d_1$ and arbitrary $d_1'$,

\begin{align*}
    &\sum_{d_1} P(M(d_1 d_2 \dots d_n) \in \mathcal{S}) \times P(d_1 | D_1)\\
    &\le \sum_{d_1} (e^{\varepsilon} \times P(M(d_1' d_2 \dots d_n) \in \mathcal{S}) + \delta) \times P(d_1 | D_1)\\
    &= e^{\varepsilon} \times P(M(d_1' d_2 \dots d_n) \in \mathcal{S}) + \delta
\end{align*}

Multiply both sides by $P(d_1'|D_1')$, and then sum over all $d_1'$, 

$$ \sum_{d_1} P(M(d_1 d_2 \dots d_n) \in \mathcal{S}) \times P(d_1 | D_1)
    \le e^{\varepsilon} \times (\sum_{d_1'} P(M(d_1' d_2 \dots d_n) \in \mathcal{S}) \times P(d_1'|D_1')) + \delta $$
    
Now we finish the proof using the inequality above,

\begin{align*}
    P(\tilde{M}(D) \in \mathcal{S}) 
    &= \sum_{d_2 \dots d_n} [\sum_{d_1} P(M(d_1 d_2 \dots d_n) \in \mathcal{S}) \times P(d_1 | D_1)] \times P(d_2 \dots d_n | D_2 \dots D_n) \\
    &\le \sum_{d_2 \dots d_n} (e^{\varepsilon} \times (\sum_{d_1'} P(M(d_1' d_2 \dots d_n) \in \mathcal{S}) \times P(d_1'|D_1')) + \delta) \times P(d_2 \dots d_n | D_2 \dots D_n) \\
    &\le e^{\varepsilon} \sum_{d_2 \dots d_n} \sum_{d_1'} P(M(d_1' d_2 \dots d_n) \in \mathcal{S}) \times P(d_1'|D_1') \times P(d_2 \dots d_n | D_2 \dots D_n)  + \delta\\
    &= e^{\varepsilon} \times P(\tilde{M}(D') \in \mathcal{S})  + \delta
\end{align*}

\end{proof}

\section{Proof for Corollaries}





  
  



\subsection{Proof for Corollary \ref{coro-theta-lambert}}

\begin{proof}

First get $\theta$ by standard calculation: $\theta = \text{max} \{ 10, \lceil e^{W(C) + 1} - \frac{1}{2} \rceil \}$, where $W$ is the Lambert $W$ function \citep{corless1996lambertw} and $C = (\ln \frac{8}{7\sqrt{2\pi}\delta})/e$. Then solve $L\ln(1 + \frac{1}{\frac{\sqrt{n}}{\gamma \theta} - 1}) \le \varepsilon$, we get $ \gamma \le \frac{e ^ {\frac{\varepsilon}{L}} - 1}{\theta e ^ {\frac{\varepsilon}{L}}} \sqrt{n}$. Theorem \ref{thm-dp} requires $\gamma \le \frac{\sqrt{n}}{\theta + 1}$, this is satisfied by  $\varepsilon \le L \ln(\theta + 1)$. 

When $n \ge 10^4$, choose $\theta = \lceil \log_{10} n + 6 \rceil$. When $n = 10^4$, $\theta = 10$, and if $n$ is greater than $10^4$, it is easy to see that $\theta!$ increase faster than $n$. Formally, when $n$ increase by 10 times, $\theta$ increase by 1, and $\theta!$ increase by more than 10 times. Thus, for $n \ge 10^4$, 

  \begin{equation*}
   \theta ! \ge \frac{n}{10 ^ 4} * 10 ! = \frac{n}{10 ^ 4} * 3.6 * 10 ^ 6 = 360n
  \end{equation*}

  Also when $\theta \ge 10$, $\frac{\theta-2}{\theta-3} \le \frac{8}{7}$, then,
  
  \begin{equation*}
   \frac{\theta-2}{(\theta-3) \theta !} \le \frac{1}{300n}
  \end{equation*}
  
\end{proof}

\section{Implementation of SFP}
\label{sec:sfp}
In this section we provide a description of the full algorithm of SFP in \citep{apple2017} for completeness and give the parameters of our implementation (we use the default parameters in \citep{apple2017} when provided in the paper). 

SFP is based on Count Mean Sketch (CMS), which contains both client-side and server-side computations. On the client side, a string is mapped to a domain of size $m$ by one of $k$ three-wise independent hash functions. Then the client submit the result with random noise (depends on $\varepsilon$ to achieve $\varepsilon$ local DP) to the server. The server gathers results from all the clients and compute the heavy hitters. In our implementation, $m=1024$ and $k=2048$.

First we introduce the basis client side encoding algorithm $\mathcal{A}_{\text{client-CMS}}$. On the client side, first sample $j$ uniformly at random from $[k]$. Then construct a vector $v$ of length $m$, with $v_{h_j(d)} = -1$ and other elements in $v$ are all 1. After that, sample vector $b \in \{-1, +1\}^m$, where $b_l$ is i.i.d. and $Pr[b_l = +1] = \frac{e^{\varepsilon/2}}{e^{\varepsilon/2}+1}$. Finally, the client returns $\tilde{v} = (v_1b_1, \dots, v_mb_m)$ and index $j$. After receiving all the noisy hashed values from the clients, the server construct a sketch matrix, and for each element $d$, we can estimate the frequency of $d$ by a frequency oracle using the sketch matrix. This server-side algorithm is denoted as $\mathcal{A}_{\text{server-CMS}}$.

We consider strings of length up to 10 (as default in the paper) by padding shorter strings with $
\$$ and truncating longer strings to length 10. The full SFP algorithm also contains client side and server side algorithms. There are $k = 2048$ three-wise independent hash functions with domain size of $m = 1024$, and a hash function $h$ with domain size $256$ shared by the server and clients. Also, there is a threshold parameter $T$ (we used $T=20$ and $T=80$ in our experiments). Then apply algorithm $\mathcal{A}_{\text{client-SFP}}$ on each client, send all the results to the server and apply $\mathcal{A}_{\text{server-SFP}}$ on the server side for the final result.

In the client side algorithm $\mathcal{A}_{\text{client-SFP}}$, suppose a client holds string $s$. First sample $l$ uniformly at random from $\{1, 3, 5, 7, 9\}$, then set $r = h(s)||s[l:l+1]$. Finally, return $\mathcal{A}_{\text{server-CMS}}(r)$ and $l$ to the server. 

In the server side algorithm $\mathcal{A}_{\text{server-SFP}}$, for each $l \in \{1, 3, 5, 7, 9\}$, create sketch matrix $M_l$ by the results from the set of users submitting index $l$ and construct the frequency oracle $\tilde{f_l}$ accordingly. Also for each $l \in \{1, 3, 5, 7, 9\}$, calculate $Q_l$, which is the $T$ tuples with the largest counts $\tilde{f_l}(w||s)$ for $s \in \Omega^2$ where $\Omega$ is the 26 lowercase English letters and $w \in [256]$. For each $w \in [256]$, we form the Cartesian product of terms in $Q(w) = \{q_1 || \dots || q_9: w||q_l \in Q_l \text{ for } l \in \{1, 3, 5, 7, 9\}$. Finally return the union of all $Q(w)$ as the result heavy hitters.

\section{Additional Discussions}

\subsection{Time, Space and Communication Complexity Analysis of TrieHH}
\label{sec:comm_cost}

\paragraph{Time Complexity} Running time on the server side is $O(m)$ for each round, so the total running time is $O(mL)$. For the running time on the user side, suppose each user has at most $Z$ words, then searching for a certain prefix cost $O(ZlogZ)$. Because there are at most $\frac{m}{\theta}$ node in each level, searching for all the prefixes in this round cost at most $O(\frac{1}{\theta}mZlogZ)$. Thus the total running time for each user is $O(\frac{mZL}{\theta}logZ)$.

\paragraph{Space Complexity} Space complexity on both the server and user side is the size of the trie. Because there are at most $\frac{m}{\theta}$ node in each level, and there are at most $L$ levels except the root node, total space complexity is $O(L\frac{m}{\theta})$.

\paragraph{Communication Cost} The worst case communication cost for round $i$: $\frac{m^2}{\theta} \times C \times i$, where $C$ is the cost to communicate a node in the trie. This is because there are at most $\frac{m}{\theta}$ length $i$ paths in the trie at round i, and the server need to update the current trie with $m$ users. The algorithm runs for at most $L$ rounds, thus the total communication cost is at most $\sum_i \frac{m^2}{\theta} \times C \times i = \frac{m^2 L(L+1) C}{2\theta}$.


\subsection{Heavy Hitters Lists}
\label{sec:discovered-oov}

We provide the list of the top 200 heavy hitters in the Sentiment140 dataset, and in the OOV dataset after we filter out the words in the dictionary.

\textbf{Top 100 heavy hitters with frequencies in the Sentiment140 dataset}

\begin{lstlisting}
{'the': '0.1028', 'you': '0.0360', 'and': '0.0308', 'just': '0.0209', "i'm": '0.0169', 'for': '0.0143', 'have': '0.0107', 'going': '0.0086', 'not': '0.0074', 'that': '0.0073', 'was': '0.0069', 'good': '0.0062', 'work': '0.0056', "it's": '0.0055', 'this': '0.0053', 'watching': '0.0052', 'back': '0.0051', 'got': '0.0049', 'with': '0.0048', 'had': '0.0048', 'love': '0.0047', 'really': '0.0047', "can't": '0.0046', 'has': '0.0045', 'but': '0.0043', 'miss': '0.0039', 'still': '0.0039', 'its': '0.0037', 'want': '0.0036', 'getting': '0.0035', 'day': '0.0035', "don't": '0.0033', 'happy': '0.0033', 'what': '0.0032', 'now': '0.0032', 'why': '0.0031', 'lol': '0.0031', 'home': '0.0031', 'wish': '0.0030', 'today': '0.0030', 'all': '0.0029', 'new': '0.0029', 'off': '0.0028', 'need': '0.0028', 'your': '0.0028', 'hate': '0.0026', 'sad': '0.0026', 'last': '0.0026', 'think': '0.0025', 'trying': '0.0025', 'out': '0.0025', 'get': '0.0025', 'hey': '0.0024', 'working': '0.0023', 'like': '0.0023', 'finally': '0.0022', 'too': '0.0022', 'well': '0.0022', 'about': '0.0022', 'one': '0.0021', 'will': '0.0021', 'thanks': '0.0021', 'very': '0.0021', 'are': '0.0021', 'feel': '0.0020', 'cant': '0.0020', 'time': '0.0020', 'bored': '0.0020', 'feeling': '0.0019', 'omg': '0.0019', 'having': '0.0018', 'tired': '0.0018', 'her': '0.0018', 'ugh': '0.0018', 'more': '0.0017', 'waiting': '0.0017', 'missing': '0.0016', 'sitting': '0.0016', 'twitter': '0.0016', 'haha': '0.0016', 'listening': '0.0016', 'how': '0.0016', 'wants': '0.0016', 'great': '0.0015', 'wow': '0.0015', 'sick': '0.0014', 'they': '0.0014', 'know': '0.0014', 'can': '0.0014', 'night': '0.0014', 'another': '0.0014', 'morning': '0.0014', 'damn': '0.0014', '@mileycyrus': '0.0014', 'way': '0.0014', 'yay': '0.0014', 'dont': '0.0014', 'looking': '0.0013', 'some': '0.0013', 'she': '0.0013'}
\end{lstlisting}

\textbf{Top 100 heavy hitters with frequencies in the OOV dataset generated from Sentiment140}

\begin{lstlisting}
{'dont': '0.011741', 'thats': '0.006008', 'didnt': '0.004292', 'sooo': '0.004023', 'awww': '0.003468', '@mileycyrus': '0.002931', '@tommcfly': '0.002556', 'soooo': '0.002473', '@ddlovato': '0.002254', 'doesnt': '0.001800', '#followfriday': '0.001694', 'havent': '0.001559', '@jonasbrothers': '0.001553', 'isnt': '0.001336', '#fb': '0.001168', 'sooooo': '0.001041', 'awwww': '0.001037', 'tweetdeck': '0.000958', 'couldnt': '0.000939', ":'(": '0.000931', 'wasnt': '0.000913', '(via': '0.000896', '@davidarchie': '0.000892', '@donniewahlberg': '0.000865', '@jonathanrknight': '0.000825', '*sigh*': '0.000811', '@jordanknight': '0.000749', 'oooh': '0.000730', '@mitchelmusso': '0.000708', '(and': '0.000705', 'ohhh': '0.000693', 'ahhhh': '0.000664', '*hugs*': '0.000647', 'nooo': '0.000634', '#ff': '0.000628', '#squarespace': '0.000612', 'youre': '0.000609', 'p.s': '0.000594', 'noooo': '0.000588', 'b/c': '0.000581', 'ughh': '0.000575', 'goodmorning': '0.000555', 'mmmm': '0.000553', 're:': '0.000552', 'twitpic': '0.000540', 'soooooo': '0.000529', '@dougiemcfly': '0.000525', '@selenagomez': '0.000524', 'bgt': '0.000514', 'realised': '0.000508', "'em": '0.000503', 'thankyou': '0.000487', "ya'll": '0.000477', 'xxxx': '0.000471', 'booo': '0.000464', 'youu': '0.000458', '@dannymcfly': '0.000455', 'wouldnt': '0.000447', 'atleast': '0.000434', 'heyy': '0.000432', "'cause": '0.000432', 'ughhh': '0.000430', 'photo:': '0.000427', 'r.i.p': '0.000421', 'wooo': '0.000415', '@peterfacinelli': '0.000415', '@aplusk': '0.000409', 'tooo': '0.000408', 'tommorow': '0.000405', 'hayfever': '0.000405', 'a.m': '0.000401', '@joeymcintyre': '0.000399', 'goood': '0.000389', 'urgh': '0.000376', '@youngq': '0.000369', 'w/o': '0.000368', 'awsome': '0.000360', '(or': '0.000355', 'aswell': '0.000354', 'skool': '0.000354', 'tweetie': '0.000353', 'tomorow': '0.000346', 'boooo': '0.000336', '@shaundiviney': '0.000335', '#iranelection': '0.000335', ':-d': '0.000330', 'awwwww': '0.000330', '#seb-day': '0.000329', 'nooooo': '0.000327', 'yeahh': '0.000326', '@perezhilton': '0.000322', '@tomfelton': '0.000316', "g'night": '0.000313', 'twitterverse': '0.000311', '(y)': '0.000304', 'grrrr': '0.000299', '@officialtila': '0.000296', 'realise': '0.000289', '(not': '0.000286', '@kirstiealley': '0.000285'}

\end{lstlisting}

\end{document}